\newtheorem{theorem}{\textbf{Theorem}}
\newtheorem{definition}{\textbf{Definition}}
\newtheorem{lemma}{\textbf{Lemma}}
\newcommand{\Expt}{\mbox{${\mathbb E}$} }
\begin{document}
\date{}\title{Optimality and Approximate Optimality of Source-Channel Separation in Networks}

\author{Chao Tian,~\IEEEmembership{Senior Member,~IEEE}, Jun Chen,~\IEEEmembership{Member,~IEEE}, 
\\Suhas N. Diggavi,~\IEEEmembership{Fellow,~IEEE}, and Shlomo Shamai (Shitz),~\IEEEmembership{Fellow,~IEEE}
\thanks{The work of J. Chen was supported in part by an Early Researcher Award
from the Province of Ontario and in part by the Natural Science and
Engineering Research Council (NSERC) of Canada under a Discovery Grant.}
\thanks{The work of S. Diggavi was supported in part by NSF award 1136174 and MURI award AFOSR FA9550-09-064.}
\thanks{The work of S. Shamai was supported by the European
Commission in the framework of the FP7 Network of Excellence in
Wireless COMmunications NEWCOM++ and NEWCOM\#, and by the Israel Science Foundation
(ISF).} 
\thanks{This paper was presented in part at 2010 IEEE International Symposium on Information Theory, Austin, TX, Jun. 2010, and IEEE International Conference on Signal Processing and Communications, Bangalore, India, Jul. 2010.}
}
\maketitle

\vspace{2cm}

\begin{abstract}
We consider the source-channel separation architecture for lossy source coding in communication networks. 
It is shown that the separation approach is optimal in two general scenarios, and is approximately 
optimal in a third scenario. The two scenarios for which 
separation is optimal complement each other: the first is when the memoryless 
sources at source nodes are arbitrarily correlated, each of which is to be reconstructed at 
possibly multiple destinations within certain distortions, but 
the channels in this network are synchronized, orthogonal and memoryless point-to-point channels; the second
is when the memoryless sources are mutually independent, each of which is to be reconstructed only at 
one destination within a certain distortion, but the channels are general, including multi-user 
channels such as multiple access, broadcast, interference and relay channels, possibly with feedback. The third scenario, 
for which we demonstrate approximate optimality of source-channel separation, generalizes the second
scenario by allowing each source to be reconstructed at multiple destinations with different distortions. 
For this case, the loss from optimality by using the separation approach can be upper-bounded
when a ``difference" distortion measure is taken, 
and in the special case of quadratic distortion measure, this leads to universal constant bounds. 
\end{abstract}

\begin{IEEEkeywords}
Joint source-channel coding, separation.
\end{IEEEkeywords}

\section{Introduction}
\label{sec:intro}

Shannon's source-channel separation theorem asserts that there is no
essential loss asymptotically in point-to-point communication systems, when the
source coding component and the channel coding component are designed and
operated separately \cite{Shannon:48}. This separation architecture
simplifies the overall communication system tremendously, because the
decoupled subsystems are much easier to design and implement, with the
codeword index as the only interface between the two components. Unfortunately, it has been
shown that the separation approach is not optimal in very simple
multiuser scenarios (e.g., \cite{Cover:80}), which suggests that the
optimality of source-channel separation may not hold beyond the
conventional point-to-point case.

Because of the clear benefits of the source-channel separation
architecture, it is important to understand the issue
better. In this work, we seek to answer the following sequence of
questions: is there a general class of multiuser communication systems
for which
\begin{itemize}
\item The separation approach is optimal?
\item If separation is not optimal, then is it at least approximately optimal?
\end{itemize}

The difficulty in answering these questions lies in the fact that in
most multiuser communication scenarios, we do not have explicit
characterizations of the rate-distortion regions, the channel capacity
regions, or the joint coding achievable distortion regions; however,
in order to determine whether the separation approach is optimal, it
is natural to first couple the rate-distortion region and the channel
capacity region, then compare it with the joint coding achievable
distortion region. With at least one region unknown in most cases, it
seems impossible to answer the above questions even in some of the
simplest settings (e.g., communicating sources on an interference
channel), let alone in more complex networks. In this work, we show that 
this difficulty in determining the optimality of source-channel separation can in fact be 
circumvented completely in several important settings, and the answers to 
the sequence of questions posed earlier are indeed positive.

More precisely, we show that for lossy coding of memoryless sources in a
network, the source-channel separation approach is optimal for the
following two general scenarios: the first scenario, referred
to as {\em distributed network joint source-channel coding} (DNJSCC), is when the sources are
arbitrarily correlated, each of which is to be reconstructed at
possibly multiple destinations within certain distortions, but the
channels between any pair of nodes  
in this network are synchronized, 
orthogonal, and memoryless; the second scenario, referred to as {\em joint source-channel multiple
unicast with distortions} (JSCMUD), is when the sources are mutually independent, each
of which is to be reconstructed only at one destination within a certain
distortion, but the channels can be general, including multi-user 
channels such as multiple access, broadcast, interference
and relay channels, possibly with feedback. 

The third scenario is a natural extension of the second one by allowing a source to 
be reconstructed at multiple destinations with different distortions; this case is referred to 
as {\em joint source-channel multiple multicast with distortions} (JSCMMD). 
For this scenario, the classical example of sending
a Gaussian source over a Gaussian broadcast channel \cite{Goblick65} reveals that 
 the source-channel separation approach is not optimal in general. Thus we turn our attention to 
whether the separation approach is approximately optimal, and show that under a ``difference" 
distortion measure, it is indeed so in the sense that the loss from the optimum can be upper-bounded. In the
important special case of quadratic distortion measure, the upper bound
is at most 0.5 bit per (additional) user which reconstructs the same source. 

The optimality of source-channel separation beyond point-to-point communications has been considered in the past for more restricted classes of sources and channels \cite{Yeung:95,XiaoLuo:07, Han:80, Han:10, SteinbergMerhav:06,Maor:06}, usually by taking advantage of the problem-specific structures and applying conventional techniques. The first scenario we consider, {\em i.e.}, the DNJSCC problem, is closely related to the problem treated in \cite{KoetterEffrosMedard:09}, where the optimality of the separation between channel coding and network coding \cite{Yeung:00} was established. In fact,  our interest in the DNJSCC problem was motivated by the success in this work,
from which we also borrow the ideas of channel simulation and sample interleaving; by applying these ideas directly, we obtain a concise proof for the DNJSCC problem without relying on the full-fledged stacked network as in \cite{KoetterEffrosMedard:09}, and our approach has the additional benefit of making explicit the underlying interactive source coding component. The result in \cite{KoetterEffrosMedard:09} was extended to the DNJSCC scenario in \cite{Jalali:10} independently from and concurrently with our work \cite{TianISIT:10,TianSPCOM:10,TianArxiv:10}. Another relevant work is \cite{AgarwalSahaiMitter:06} where the super-channel view similar to what we use  in the JSCMUD problem was applied to non-ergodic point-to-point channels. Also notable is the ``information separation'' discovered by Tuncel \cite{Tuncel:06}, which is a notion of separation weaker than the classical source-channel separation, and thus not the focus of this work. 

The rest of this paper is organized as follows. 
Examples are provided in Section \ref{sec:examples} to illustrate the underlying intuitions, and necessary definitions are
given in Section \ref{sec:def}. The main results and the proofs on DNJSCC, JSCMUD and JSCMMD 
are given in Sections \ref{sec:DNSC}, \ref{sec:proofunicast} and \ref{sec:proofmulticast},
respectively. Section \ref{sec:conclusion} finally concludes the paper.

\section{Three Examples}
\label{sec:examples}

In this section three examples are discussed in the context of sending
sources on interference channels to provide some intuitions for the
optimality or approximate optimality of source-channel separation in
DNJSCC, JSCMUD and JSCMMD. The main results of this work are built on these
intuitions, and Sections \ref{sec:DNSC}, \ref{sec:proofunicast} and
\ref{sec:proofmulticast} essentially make them more precise and
rigorous. For simplicity, the channel
bandwidth and the source bandwidth are assumed to match in this section.

\subsection{An Example for Distributed Network Joint Source-Channel Coding}

\begin{figure}[tb]
\begin{centering}
\includegraphics[width=8cm]{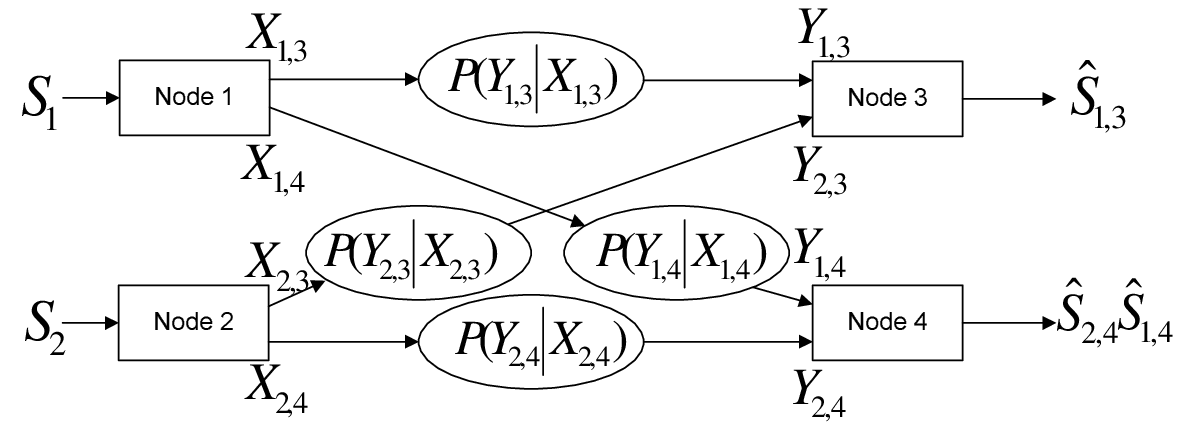}
\caption{Transmitting correlated sources on an interference
  network.\label{fig:example0}}
\end{centering}
\end{figure}

Consider the example in Fig. \ref{fig:example0}, where the
discrete-time finite-alphabet memoryless sources $S_1$ and $S_2$ are correlated. Each
discrete-time finite-alphabet memoryless channel between a transmitter and a receiver is
orthogonal to the other channels: the channel from node $i$ to node $j$ has 
transition probability $P(Y_{i,j}|X_{i,j})$ and channel capacity $C_{i,j}$, and 
the overall transition probability of the channel network is $\prod_{(i,j)} P(Y_{i,j}|X_{i,j})$. 
Both node $3$ and node $4$ require a lossy reconstruction of
source $S_1$, denoted as $\hat{S}_{1,3}$ and $\hat{S}_{1,4}$, respectively. Node $4$ also requires
a lossy reconstruction of source $S_2$, denoted as $\hat{S}_{2,4}$. The rate-distortion region of 
the underlying source coding problem is unknown, characterizing which is at least as difficult as the distributed
source coding problem \cite{Berger:78}.

\begin{figure}[tb]
\begin{centering}
\includegraphics[width=8cm]{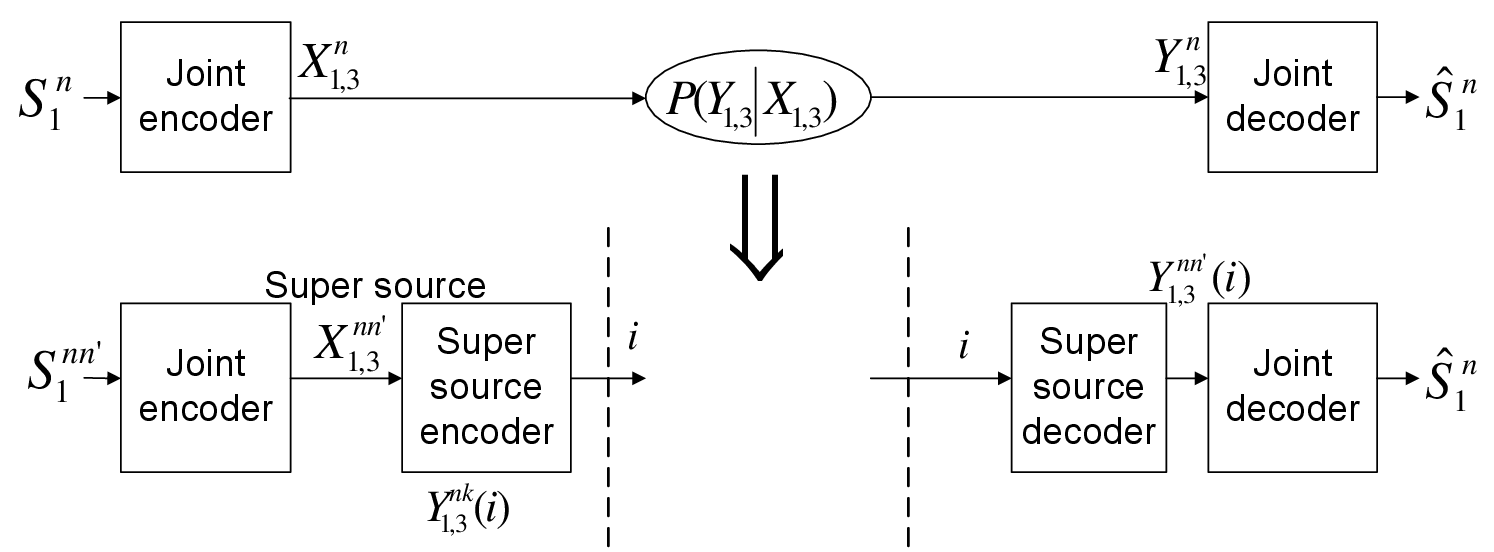}
\caption{Extracting a super-source from a joint source-channel code.\label{fig:DNStrans}}
\end{centering}
\end{figure}

Suppose there exists a length-$n$ joint
source-channel code that achieves the distortion triple
$(D_{1,3},D_{1,4},D_{2,4})$. The key observation is the following
simple fact: if we fix this joint source-channel code, 
then the channel input for any given channel, for example $X^n_{1,3}$, 
can be viewed as a super (block) source, independent and identically distributed 
across blocks; see Fig. \ref{fig:DNStrans}. Therefore, we can encode a length-$n'$ sequence of such blocks
 using a ``rate-distortion'' code of rate per block
slightly exceeding $I(X^n_{1,3};Y^n_{1,3})$, the codewords of which are generated using
the distribution $P(Y^n_{1,3})$. It follows that with probability approaching one (as $n'$ goes to infinity)  a 
$Y^{nn'}_{1,3}$ codeword can be found in the codebook that is jointly typical with a channel input sequence $X^{nn'}_{1,3}$, 
{\em i.e.}, a length-$n'$ vector of the super source samples. 
This lossy source code essentially simulates the channel output over $n'$ length-$n$ blocks, and only the codeword index needs 
to be known at node $3$ to reconstruct the simulated channel output $Y^{nn'}_{1,3}$. 
Note that the rate of this code is $I(X^n_{1,3};Y^n_{1,3})\leq nC_{1,3}$; a similar argument holds for all other links. The original joint source-channel code decoders can now be applied on the simulated channel outputs to yield the reconstructions. This intuitively implies 
that the underlying {\em source coding problem} is guaranteed to achieve the distortion $(D_{1,3},D_{1,4},D_{2,4})$ at rates $(C_{1,3},C_{1,4},C_{2,3},C_{2,4})$, 
which would further imply the optimality of the separation approach.    

The above observation largely reflects the intuition behind the proof of source-channel separation for the DNJSCC
problem, however, some technical
details (besides the asymptotically diminishing quantities omitted in
the above discussion) need to be addressed: the main difficulty is
that when the network has relays or cycles, the super source argument
given above does not apply since channel usage constraints prevent 
coding over long super-channel blocks directly. The proof
given in Section \ref{sec:DNSC} will resolve this difficulty through
an intricate arrangement of channel simulation.

\subsection{An Example for Joint Source-Channel Multiple Unicast with Distortions}

\label{subsec:MUMMD}

Consider the problem depicted in Fig. \ref{fig:example1}, where the sources $S_1$, $S_2$ and $S_3$ are mutually independent; here the interference channel is more generally
given by the transition probability $P(Y_3,Y_4|X_1,X_2)$, where $X_1,X_2$ are the channel inputs by node $1$ and node $2$, respectively, and $Y_3,Y_4$ are the channel outputs at
node $3$ and node $4$, respectively. Since the capacity region of the interference channel is unknown, it is infeasible
to explicitly characterize the achievable distortion region of the separation approach.

\begin{figure}[tb]
\begin{centering}
\includegraphics[width=8cm]{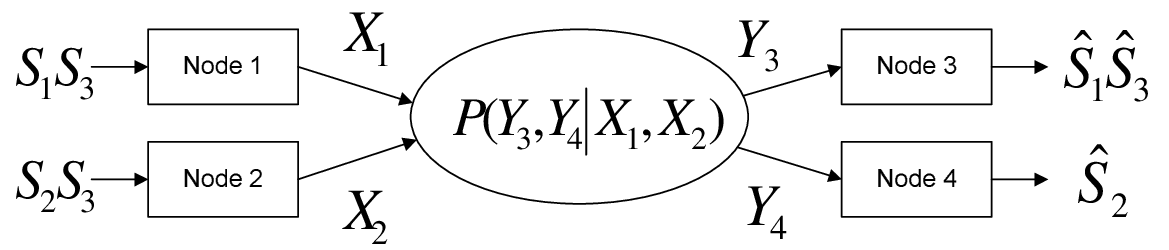}
\caption{Transmitting mutually independent $S_1,S_2,S_3$ on an interference channel.\label{fig:example1}}
\end{centering}
\end{figure}

Suppose a distortion triple $(D_1,D_2,D_3)$ is achievable using some joint source-channel
code of length-$n$. 
The key observation is now the following
simple fact: if we fix this joint
source-channel code, the transition probability of
$P(\hat{S}^n_1,\hat{S}^n_2,\hat{S}^n_3|S^n_1,S^n_2,S^n_3)$ can be viewed as that of
an alternative super interference channel with three users. On this super channel, 
the individual mutual information guarantee 
$I(S^n_i;\hat{S}^n_i)\geq n R_{i}(D_i)$ holds for $i=1,2,3$,
due to the conventional rate-distortion theorem \cite{CoverThomas}. Thus intuitively, this super channel is \lq\lq{}good\rq\rq{} 
since the mutual information $I(S^n_i;\hat{S}^n_i)$ terms are lower bounded, and the rate triple
$(nR_{1}(D_1),nR_{2}(D_2),nR_{3}(D_3))$ should be in its capacity region, 
which would further imply that any achievable distortion triple $(D_1,D_2,D_3)$ is 
achievable by the separation approach.

In order to show that the super interference channel can indeed 
support the rate triple $(nR_{1}(D_1),nR_{2}(D_2),nR_{3}(D_3))$, we
essentially need to construct (random) codes over large super-channel 
blocks, and prove that the error 
probability can be made small, just as in conventional channels. The
proof in Section \ref{sec:proofunicast} follows this approach and
makes the above intuitive argument more rigorous.

\subsection{An Example for Joint Source-Channel Multiple Multicast with Distortions}

\begin{figure}[tb]
\begin{centering}
\includegraphics[width=8cm]{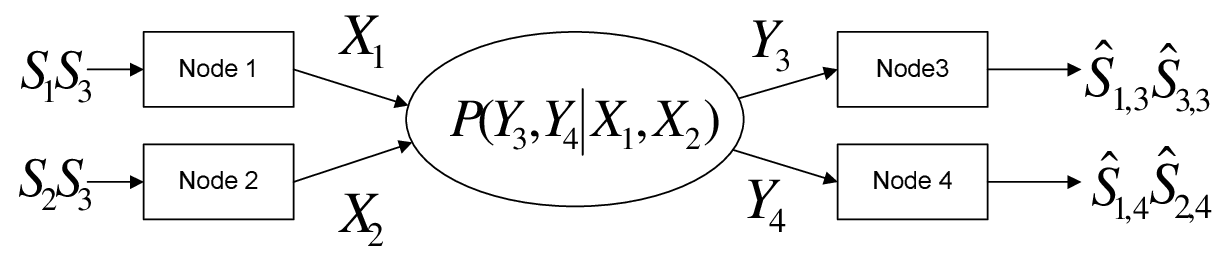}
\caption{Transmitting mutually independent $S_1,S_2,S_3$ on an interference channel
  to multiple destinations, {\em i.e.}, source $S_1$ is required at both
  destination node $3$ and node $4$.\label{fig:example2}}
\end{centering}
\end{figure}

Consider the problem depicted in Fig. \ref{fig:example2}, which is only slightly different from
that in Fig. \ref{fig:example1} in that source $S_1$ is to be reconstructed at both node $3$ and node $4$, denoted as $\hat{S}_{1,3}$ and $\hat{S}_{1,4}$, respectively; the reconstruction of source $S_3$ at node $3$ is denoted\footnote{The notation used here may seem unnatural initially, however it will become clear that this notation is convenient when generalizing to more complex networks.} as $\hat{S}_{3,3}$ and the reconstruction of source $S_2$ at node $4$ is denoted as $\hat{S}_{2,4}$. 
Taking a similar view as in the previous example, the abstracted channel now has transition probability 
$P(\hat{S}^n_{1,3},\hat{S}^n_{1,4},\hat{S}^n_{2,4},\hat{S}^n_{3,3}|S^n_1,S^n_2,S^n_3)$. 
However, the mutual information bounds by the conventional
rate-distortion theorem 
cannot be directly used as in the previous case. A
moment of thought should convince the readers that the broadcast
nature of the marginal transition probability
$P(\hat{S}^n_{1,3},\hat{S}^n_{1,4}|S^n_1)$ is the culprit, and some
additional coding component is needed. 

A natural separation architecture here 
is to use a successive refinement source code  \cite{EquitzCover:91} to produce
descriptions satisfying the distortion requirements for each destination and couple it to a superposition
broadcast code \cite{CoverThomas} to deliver reliably these 
messages in the degraded message set \cite{KornerMarton:77}. More precisely, in the example of Fig.
\ref{fig:example2}, assume without loss of generality that 
the distortion for source $S_1$ at node $3$ is greater than 
that at node $4$. A successive refinement code for
$S_1$ can be used to produce messages $(W_{1,1},W_{1,2})$ such that $W_{1,1}$ is
to be delivered to node $3$ and both $(W_{1,1},W_{1,2})$ are to be
delivered to node $4$. Node $1$ also produces a message $W_{3,1}$ 
to encode source $S_3$, and node $2$ produces a message $W_{2,1}$ to encode
source $S_2$. The messages $(W_{1,1},W_{3,1})$ need to be reliably transmitted  
to node $3$, and the messages $(W_{1,1},W_{1,2},W_{2,1})$ to node $4$. 

Let us for the moment isolate source $S_1$ and focus on the super block broadcast channel 
$P(\hat{S}^n_{1,3},\hat{S}^n_{1,4}|S^n_1)$ with the messages $(W_{1,1},W_{1,2})$, since it is the main difficulty in generalizing the proof approach for JSCMUD. 
We can show that this broadcast channel can support a certain rate pair for degraded message set broadcast, by introducing an additional auxiliary random variable. The same auxiliary random variable is also used to construct 
successive refinement source code for $S_1$. The afore-mentioned broadcast channel code rates are however insufficient to support this 
successive refinement source code; 
nevertheless, the shortfall can be upper-bounded 
by comparing the channel code rates and the source code rates. This upper bound implies the 
approximate optimality of source-channel separation in JSCMMD.

\section{Notation and Definitions}
\label{sec:def}

In this section, notation and necessary definitions are provided. 
The notation would become rather unwieldy if a unified framework were 
used for all the problems treated in this
work, therefore we forgo this ambitious goal and define the problems
separately. We focus on the problems with discrete-time finite-alphabet memoryless sources, 
 discrete-time finite-alphabet memoryless channels and bounded 
distortion measures, unless stated otherwise explicitly. 
It should be noted that it is often assumed
that the sources are independent of the channels in such separation problems, 
which is also assumed in this work;
this is because otherwise, even if the encoding and the decoding
functions are designed separately, the inherent dependence between the source and the channel will render such a
separation rather meaningless even in a point-to-point setting.

\subsection{Definitions for the Distributed Network Joint Source-Channel Coding Problem}

For this case, the network with a total of $N$ nodes can be
conveniently written as a directed graph
$\mathcal{G}=(\mathcal{V},\mathcal{E})$, where
$\mathcal{V}=\{1,2,\ldots,N\}$ is the set of
nodes, and $\mathcal{E}$ is the set of edges between any two nodes; from here on, 
for any positive integer $M$, we use $\mathcal{I}_M$ to denote the set $\{1,2,\ldots,M\}$. 

Each edge $e=(i,j)\in \mathcal{E}$ is associated with a channel, 
whose transition probability is given as $P(Y_{i,j}|X_{i,j})$
with input alphabet $\mathcal{X}_{i,j}$ and output alphabet
$\mathcal{Y}_{i,j}$ where the input and the output are not always independent, 
{\em i.e.}, the capacity of the channel on this link is non-zero; 
these channels are assumed to be synchronized. Each node $i$ has a source $S_i$,
distributed in the alphabet $\mathcal{S}_i$, and the collection of the
sources are distributed according to the joint distribution
$P(S_1,S_2,\ldots,S_N)$ at each time instance. We have
inherently assumed these sources are synchronized for simplicity, and thus the
notation $P(S_1,S_2,\ldots,S_N)$ is meaningful. A length-$n$ vector of a source $S_i$ is written as
$S^n_i$, and the $t$-th symbol in this vector is written as $S_i(t)$; {\em i.e.}, $S^n_{i}=(S_i(1),S_i(2),\ldots,S_i(n))$. A set of sources 
$\{S_{i},i\in \mathcal{A}\}$ may be written as $S_{\mathcal{A}}$; similarly, $\{X_{i,j},(i,j)\in \mathcal{A}\}$ may be written as $X_{\mathcal{A}}$. 
Upper case is used for random variables, and lower case for
their realizations. For any set $\mathcal{S}$, its $r$-th order 
product set is written as $\mathcal{S}^r$. 

For each source, a distortion measure is defined 
as $d: \mathcal{S}_i\times \hat{\mathcal{S}}_i\rightarrow [0,\infty)$
 where $\hat{\mathcal{S}}_i$ is the reconstruction alphabet. Nodes that are interested in a given source $S_i$ may
use different reconstruction alphabets and distortion measures, however, we do not distinguish 
them for notational simplicity.  A node $j$ may be interested in only a subset of the sources
$\{S_i,i\in \mathcal{I}_N\}$; notationally, the set of
sources that node $j$ is interested in is written as $\mathscr{T}_j$. The class of codes being considered for the distributed network
source coding problem are conventional block codes defined below.

\begin{definition}
\label{def:DNSj1}
An $(m,n,\{d_{k,j},k\in \mathscr{T}_j\})$ distributed network joint source-channel 
code on a joint source-channel network $(\mathcal{V},\mathcal{E},\{\mathscr{T}_j,j\in\mathcal{I}_N\},P(S_1,S_2,\ldots,S_N),\prod_{(i,j)\in \mathcal{E}}P(Y_{i,j}|X_{i,j}))$ consists of the following components:
\begin{itemize}
\item At each transmitter node $i$, for each $j$ such that $(i,j)\in
  \mathcal{E}$, an encoding function for time instance $t$
\begin{align}
\label{eq:JSCencDSC}
&\phi_{i,j}^{(t)}: \mathcal{S}^{m}_i\times \prod_{(k,i)\in
    \mathcal{E}} \mathcal{Y}^{t-1}_{k,i}\rightarrow \mathcal{X}_{i,j},
  \qquad t =1,2,\ldots, n.
\end{align}
\item At each receiver node $j$, for each source $k\in \mathscr{T}_j$,
  a decoding function
\begin{align}
\label{eqn:sourcedecoderDSC}
\psi_{k,j}: \prod_{(i,j)\in \mathcal{E}}\mathcal{Y}^{n}_{i,j}\times
\mathcal{S}^{m}_j \rightarrow \hat{\mathcal{S}}^{m}_{k}.
\end{align}
\end{itemize}
The encoding and the decoding functions induce the distortions
\begin{align*}
&d_{k,j}=\frac{1}{m}\sum_{t=1}^{m}\Expt{d(S_{k}(t),\hat{S}_{k,j}(t))},\\
&\qquad\qquad j=1,2,\ldots,N,\quad\mbox{and} \quad k\in \mathscr{T}_j,
\end{align*}
where $\hat{S}_{k,j}$ is the reconstruction of source $S_k$ at node $j$.
\end{definition}

Here $m$ is the source block length and $n$ is the channel block length, which imply that 
there is a source-channel bandwidth mismatch factor of $\kappa=n/m$ (channel uses per source sample).
If a node is not interested in a certain source, the distortion of the reconstruction at this node can simply be assumed to be
large. Thus we can write a distortion matrix, whose element $d_{k,j}$ is the distortion associated with
the reconstruction of source $S_k$ at node $j$. Without loss
of generality\footnote{Without loss of generality, we can always
  assume the minimum distortion for a given distortion measure is
  zero; see \cite{BergerBook}.}, let the element $d_{i,i}=0$
and define $d_{i,j}=d^{\max}_i$ for $i\notin\mathscr{T}_j$,
where $d^{\max}_i$ is the distortion achievable at rate zero for
source $S_i$. The region of achievable distortion matrices can be defined as follows.

\begin{definition}
\label{def:DNSj2}
A distortion matrix $\vec{D}$ is achievable for distributed network joint 
source-channel coding with bandwidth mismatch factor $\kappa$ on a joint 
source-channel network $(\mathcal{V},\mathcal{E},\{\mathscr{T}_j,j\in\mathcal{I}_N\},P(S_1,S_2,\ldots,S_N),\prod_{(i,j)\in \mathcal{E}}P(Y_{i,j}|X_{i,j}))$, 
if for any
$\epsilon>0$ and sufficiently large $m$, there exist an integer $n\leq
\kappa m $ and an $(m,n,\{d_{k,j},k\in \mathscr{T}_j\})$ distributed 
network joint source-channel code, such that $d_{i,j}\leq D_{i,j}+\epsilon$,
$i,j=1,2,\ldots,N$. The collection of all such distortion matrices is
the distributed network joint source-channel coding achievable distortion region,
denoted as $\mathcal{D}_{dis}$.
\end{definition}

To discuss source-channel separation, it is important to define the source coding problem and the 
channel coding problem that are being separated into. The channel coding problem in DNJSCC is simply the point-to-point channel
capacity problem. The source coding problem is more complex, which requires  the incorporation of 
 interactive coding.

\begin{definition}
\label{def:DDNSC}
An $(m,l,\{L_{i,j},(i,j)\in\mathcal{E}\},\{d_{k,j},k\in \mathscr{T}_j\})$ distributed network source code with a total of $l$ sessions on a source communication
network $(\mathcal{V},\mathcal{E},\{\mathscr{T}_j,j\in\mathcal{I}_N\},P(S_1,S_2,\ldots,S_N))$ consists of the following components:
\begin{itemize}
\item At each (transmitter) node $i$, for each $j$ such that $(i,j)\in \mathcal{E}$, an encoding function for transmission session $t=1,2,\ldots,l$,
\begin{align}
&\tilde{\phi}_{i,j}^{(t)}: \mathcal{S}^{m}_i\times \prod_{(k,i)\in \mathcal{E}} \mathcal{I}_{L_{k,i}}^{t-1}\rightarrow \mathcal{I}_{L_{i,j}}, 
\end{align}
where $L_{i,j}$ and $L_{k,i}$'s are positive integers.
\item At each receiver node $j$, for each source $k\in \mathscr{T}_j$, a decoding function
\begin{align}
\label{eqn:sourcedecoderDSCdigital}
\tilde{\psi}_{k,j}: \prod_{(i,j)\in \mathcal{E}}\mathcal{I}^l_{L_{i,j}}\times \mathcal{S}^{m}_j \rightarrow \hat{\mathcal{S}}^{m}_{k}.
\end{align}
\end{itemize}
The encoding functions and the decoding functions induce the distortions
\begin{align*}
&d_{k,j}=\frac{1}{m}\sum_{t=1}^{m}\Expt{d(S_{k}(t),\hat{S}_{k,j}(t))},\nonumber\\
&\qquad\qquad j=1,2,\ldots,N,\quad\mbox{and} \quad k\in \mathscr{T}_j,
\end{align*}
where again $\hat{S}_{k,j}$ is the reconstruction of source $S_k$ at node $j$.
\end{definition}

\begin{definition}
\label{def:sourcecodingRD}
A rate-distortion-matrix tuple $(\{R_{i,j},
(i,j)\in\mathcal{E}\},\vec{D})$ is achievable on a source communication
network $(\mathcal{V},\mathcal{E},\{\mathscr{T}_j,j\in\mathcal{I}_N\},P(S_1,S_2,\ldots,S_N))$, if for any $\epsilon>0$, there exists an
integer $l$, such that for any sufficiently large $m$, there exists an
$(m,l,\{L_{i,j},(i,j)\in\mathcal{E}\},\{d_{k,j},k\in \mathscr{T}_j\})$
distributed network source code such that
\begin{align}
&R_{i,j}+\epsilon\geq \frac{l}{m}\log L_{i,j},\quad (i,j)\in\mathcal{E}\nonumber\\
&d_{i,j}\leq D_{i,j}+\epsilon, \quad i,j=1,2,\ldots,N.
\end{align}
The collection of distortion matrices $\vec{D}$ for which the
rate-distortion-matrix tuple $(\{R_{i,j},
(i,j)\in\mathcal{E}\},\vec{D})$ is achievable for a given rate vector
$\{R_{i,j}, (i,j)\in\mathcal{E}\}$ is denoted\footnote{$\mathcal{D}_{dis}$ has already been used in the joint coding problem, and here we slightly abuse the notation by using $\mathcal{D}_{dis}(\{R_{i,j}\}_{(i,j)\in\mathcal{E}})$ to denote the distortion-rate function in the source coding problem. } as
$\mathcal{D}_{dis}(\{R_{i,j}\}_{(i,j)\in\mathcal{E}})$.
\end{definition}

Note that in the above definition, $m$ grows to infinity for any fixed value of $l$. One may alternatively define 
the region to allow $m$ and $l$ to grow in a more general manner. However, this alternative definition 
will only enlarge the region $\mathcal{D}_{dis}(\{R_{i,j}\}_{(i,j)\in\mathcal{E}})$, and thus does not affect the optimality result. 
In other words, the separation result we shall present is in fact stronger with the restrictions in Definitions \ref{def:DDNSC} and \ref{def:sourcecodingRD} 
than that under a more general version of these definitions.

Roughly speaking, $\frac{1}{m}\log L_{i,j}$ is the rate of the noiseless channel on edge $(i,j)$ 
per source symbol in each session. 
There are a total of $l$ sessions, and on each edge the same rate is used in all sessions. 
At the end of each session, the index $w_{j,k}\in \mathcal{I}_{L_{j,k}}$ 
in this session becomes available at destination node $k$, which can be used by node $k$ 
 in the next session. 
In other words, the encoding functions observe the causality
constraints on the session level.
Note that the region $\mathcal{D}_{dis}(\{R_{i,j}\}_{(i,j)\in\mathcal{E}})$ is convex by a time-sharing argument. Definitions \ref{def:DDNSC} and \ref{def:sourcecodingRD} 
specify a special class of interactive source coding problem, which appears 
particularly important given the result presented in this work.  

We can now combine the 
source codes together with the capacity-achieving channel codes for
each channel on the original communication network. More precisely, we
can define the achievable distortion region using such a separation
approach as
\begin{align}
\label{eqn:DisRegionDef}
\mathcal{D}^*_{dis}=\mathcal{D}_{dis}(\{\kappa C_{i,j}\}_{(i,j)\in\mathcal{E}}),
\end{align}
where $C_{i,j}$ is the channel capacity between node $i$ and node $j$, sometimes written as $C_e$ with $e=(i,j)\in \mathcal{E}$.

\subsection{Definitions for Joint Source-Channel Multiple Unicast and Multiple Multicast with Distortions}
\label{subsec:DefsUnicastMulticast}

There are $M$ mutually independent sources, denoted as $S_i$, distributed in the alphabet
$\mathcal{S}_i$ according to some distribution $P(S_i)$,
$i=1,2,\ldots,M$; note that the index $i$ here is not related to the
index of the node, unlike in the last section. For simplicity, we assume all the sources
are synchronized. The distortion measures are defined similarly as in the
last subsection, however we do not allow the existence of multiple
distortion measures for the same source. 
Let the number of nodes be $N$. 
For simplicity we treat the overall communication network as a single memoryless channel, with inputs 
$(X_1,X_2,\ldots,X_N)$ over the alphabets $\mathcal{X}_1\times\mathcal{X}_2\times\ldots\times\mathcal{X}_N$ and outputs $(Y_1,Y_2,\ldots,Y_N)$ over the alphabets  $\mathcal{Y}_1\times\mathcal{Y}_2\times\ldots\times\mathcal{Y}_N$, and transition probability given by $P(Y^N_1|X^N_1)$; $X_i$ and $Y_i$ are the channel input and output at node $i$, respectively. 

Each source $S_i$ can be present at several nodes, and for
each node $j\in \mathcal{I}_N$, we denote the sources present at node
$j$ as $\mathscr{S}_j$. The receiver demands are defined as follows:
\begin{itemize}
\item \textbf{Joint source-channel multiple unicast with distortions:} each source is to be reconstructed
 at a single destination. Again denote
  for receiver node $j$ the set of the sources it is interested in as
  $\mathscr{T}_j$, then $\mathscr{T}_j\cap
  \mathscr{T}_k=\emptyset$ for any $j\neq k$.
\item \textbf{Joint source-channel multiple multicast with distortions:} each source is to be reconstructed
 at multiple destinations, {\em i.e.}, it is
  possible that $\mathscr{T}_j\cap \mathscr{T}_k \neq\emptyset$.
\end{itemize}

\begin{definition}
\label{def:JScodeUnicast}
An $(m,n,d_1,d_2,\ldots,d_M)$ JSCMUD code on a source-channel 
 communication network $(\{\mathscr{S}_j,j\in\mathcal{I}_N\},\{\mathscr{T}_j,j\in\mathcal{I}_N\},\prod_{i=1}^M P(S_i),P(Y^N_1|X^N_1))$ consists of the following components:
\begin{itemize}
\item At each transmitter node $j$, an encoding function for (time) index $t$
\begin{align}
\label{eq:JSCencUnicast}
&\phi_{j}^{(t)}:\prod_{i\in \mathscr{S}_j} \mathcal{S}^{m}_i\times 
  \mathcal{Y}^{t-1}_{j}\rightarrow \mathcal{X}_{j},
  \qquad t =1,2,\ldots,n.
\end{align}
\item At each receiver node $j$, for each source $k\in \mathscr{T}_j$,
  a decoding function
\begin{align}
\label{eqn:sourcedecoder}
\psi_{k,j}: \mathcal{Y}^{n}_{j}\times
\prod_{i\in \mathscr{S}_j}\mathcal{S}^{m}_i \rightarrow \hat{\mathcal{S}}^{m}_k.
\end{align}
\end{itemize}
The encoding functions and decoding functions induce the distortion
\begin{align*}
d_k=\frac{1}{m}\sum_{t=1}^{m}\Expt{d(S_{k}(t),\hat{S}_k(t))},\quad k=1,2,\ldots,M,
\end{align*}
where $\hat{S}_k(t)$ is the reconstruction of source $S_k$ at a node $j$ such that $k\in\mathscr{T}_j$.
\end{definition}

\begin{definition}
\label{def:JScodeUnicastRegion}
A distortion vector $(D_1,D_2,\ldots,D_M)$ is achievable for JSCMUD  on a source-channel 
 communication network $(\{\mathscr{S}_j,j\in\mathcal{I}_N\},\{\mathscr{T}_j,j\in\mathcal{I}_N\},\prod_{i=1}^M P(S_i),P(Y^N_1|X^N_1))$ with a bandwidth mismatch factor $\kappa$, if for any $\epsilon>0$ and sufficiently large $m$, there exist an integer $n\leq \kappa m$ and an $(m,n,d_1,d_2,\ldots,d_M)$ JSCMUD code, such that $d_i\leq D_i+\epsilon$,
$i=1,2,\ldots,M$. The collection of all such distortion vectors is the achievable JSCMUD distortion region, denoted as $\mathcal{D}_{uni}$.
\end{definition}

Next we define the source coding problem and the channel coding problem that are being separated into. For the JSCMUD problem, 
the source codes are conventional lossy source
codes. The channel coding problem is more involved: each source $S_i$ is replaced with a message $W_i$ of
cardinality $L_i$ with a uniform distribution; moreover, these
messages are mutually independent. The precise channel code definition is as follows.

\begin{definition}
\label{def:DigChannelcodeUnicast}
An $(n,L_1,L_2,\ldots,L_M,P_{err})$ multiple unicast channel code on a channel communication network 
 $(\{\mathscr{S}_j,j\in\mathcal{I}_N\},\{\mathscr{T}_j,j\in\mathcal{I}_N\},P(Y^N_1|X^N_1))$
consists of the following components:
\begin{itemize}
\item At each transmitter node $j$, an encoding function for (time) index $t$
\begin{align}
\label{eq:DigEncUnicast}
&\tilde{\phi}_{j}^{(t)}:\prod_{i\in \mathscr{S}_j} \mathcal{I}_{L_i}\times
  \mathcal{Y}^{t-1}_{j}\rightarrow
  \mathcal{X}_{j},\qquad t =1,2,\ldots, n.
\end{align}
\item At each receiver node $j$, for each message $W_k$ where $k\in \mathscr{T}_j$,
  a decoding function
\begin{align}
\label{eq:DigSuperDecUnicast}
\tilde{\psi}_{k,j}: \mathcal{Y}^{n}_{j}\times\prod_{i\in
  \mathscr{S}_j} \mathcal{I}_{L_i} \rightarrow \mathcal{I}_{L_k}.
\end{align}
\end{itemize}
Denote the decoded message as $\hat{W}_i$ at node $j$ where $i\in \mathcal{T}_j$. The encoding functions and decoding functions induce the average decoding error probability
\begin{align}
P_{err}=\mbox{Pr}(\bigcup_{i=1}^M{W_i\neq \hat{W}_i}).
\end{align}
\end{definition}

\begin{definition}
A rate vector $(R_1,R_2,\ldots,R_M)$ is achievable for multiple
unicast channel coding on a channel communication network 
 $(\{\mathscr{S}_j,j\in\mathcal{I}_N\},\{\mathscr{T}_j,j\in\mathcal{I}_N\},P(Y^N_1|X^N_1))$, if for any
$\epsilon>0$ and sufficiently large $n$, there exists an
$(n,L_1,L_2,\ldots,L_M,\epsilon)$ multiple unicast channel code, such that $R_i\leq \frac{1}{n}\log
L_i+\epsilon$, $i=1,2,\ldots,M$. The collection of such achievable rate vectors is the
achievable capacity region of the network, denoted as $\mathcal{C}_{uni}$.
\label{def:diguniregion}
\end{definition}

Using conventional rate-distortion codes on each source and then
combining it with the above defined multiple unicast channel codes, an achievable distortion region is immediate, which
will be denoted as $\mathcal{D}^*_{uni}$. More precisely, we can write
\begin{align}
&\mathcal{D}^*_{uni}=\bigcup_{(R_1,R_2,\ldots,R_M)\in\mathcal{C}_{uni}}\left\{\begin{array}{c}(D_1,D_2,\ldots,D_M):\\
D_i\geq D_i(\kappa R_i),\\
i=1,2,\ldots,M\end{array}\right\}
\label{eqn:Duni}
\end{align}
where $D_i(\cdot)$ is the distortion-rate function of the source $S_i$.

In the case of JSCMMD, a source is to be
reconstructed with possibly different distortions at multiple
destinations. The JSCMMD codes are defined in the
same manner as in the case of JSCMUD, and thus the
detailed definitions are omitted here. The
achievable distortion matrix and the achievable distortion
region $\mathcal{D}_{mul}$ can also be defined accordingly.

The source-channel separation scheme for JSCMMD is slightly more
involved. Consider first source $S_i$, and assume it is to be
reconstructed in a lossy manner at nodes in the set
$\mathscr{Q}_i=\{j:i\in\mathscr{T}_j\}$. The source codes we shall
consider are successive refinement codes \cite{EquitzCover:91}, and
source $S_i$ is encoded in $|\mathscr{Q}_i|$ stages, where the
operator $|\cdot|$ denotes the cardinality of a set. For the channel 
codes in the separation approach, 
we consider the degraded message set problem \cite{KornerMarton:77}. More
precisely, in the given communication network, fix an  order $O_i$ for the elements in the set $\mathscr{Q}_i$ for each $i=1,2,\ldots,M$.
The source $S_i$ is replaced with a total of $|\mathscr{Q}_i|$ messages, denoted
as $W_{i,j}$, whose rate is $R_{i,O_i(j)}$, $j=1,2,\ldots,|\mathscr{Q}_i|$, where $O_i(j)$ is the $j$-th 
element in the order $O_i$. The
$k$-th node in this given order $O_i$ is required to reconstruct the first
$k$ messages, $W_{i,j}$, $j=1,2,\ldots,k$. We can now define 
the achievable capacity region $\mathcal{C}_{mul}(O_1,O_2,\ldots,O_M)$ for this
degraded message set problem, which depends on the set of
orders $\vec{O}=(O_1,O_2,\ldots,O_M)$; see the JSCMMD example 
in Section \ref{subsec:MUMMD}, where  $\mathscr{Q}_1=\{3,4\}$ and the specific order 
discussed is $O_1=(3,4)$.

The degraded message set problem naturally sets the stage
for the successive refinement source codes, and by combining these two
components, we arrive at an achievable distortion region using the 
 separation appraoch for a given 
set of orders $\vec{O}$. We
shall denote this achievable region as
$\mathcal{D}^*_{mul}(\vec{O})$.

\section{Optimality of Separation for Distributed Network Joint Source-Channel Coding}
\label{sec:DNSC}

Our first main result formally states the optimality of source-channel separation in the DNJSCC problem. Recall $\mathcal{D}_{dis}$ and $\mathcal{D}^*_{dis}$ given in Definition \ref{def:DNSj2} and Eqn. (\ref{eqn:DisRegionDef}), respectively.
\begin{theorem}
\label{theorem:DNSC}
$\mathcal{D}_{dis}=\mathcal{D}^*_{dis}$.
\end{theorem}

The uniform Markov lemma \cite{Chang:78,Kaspi:79} is needed in the proof of this theorem, which is an alternative version of the Markov lemma in \cite{Berger:78,Tung:78}. 
It is rewritten below using notation more convenient to us.
\begin{lemma}
\label{lemma:markov}
Let $X\leftrightarrow Y \leftrightarrow Z$ be a Markov string in finite alphabets. For any fixed strongly jointly typical sequence pair $(x^n,y^n)$,
let $Z^n$ be chosen uniformly at random from the set which consists of all sequences that are strongly typical with $y^n$. Let $Q(\cdot)$ be the probability 
measure induced by this random choice. 
Then $\lim_{n\rightarrow \infty}Q((x^n,y^n,Z^n) \mbox{ are not strongly jointly typical})=0$, and the convergence is uniform over the set of strongly jointly typical $(x^n,y^n)$ sequence pairs.
\end{lemma}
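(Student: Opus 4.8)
The plan is to prove the lemma by a counting-and-concentration argument that isolates the role of the Markov condition and pinpoints the source of the uniform convergence. Throughout, fix the strongly jointly typical pair $(x^n,y^n)$ and write $T_{a,b}=\{t:(x_t,y_t)=(a,b)\}$ and $T_b=\{t:y_t=b\}$ for the index blocks it determines, with sizes $N(a,b)=|T_{a,b}|$ and $N(b)=|T_b|$. By strong typicality of the pair, $|N(a,b)-nP_{X,Y}(a,b)|\le n\delta$ and $|N(b)-nP_Y(b)|\le n\delta$ for every $(a,b)$, so for $\delta$ smaller than the least positive value of $P_{X,Y}$, every block with $P_{X,Y}(a,b)>0$ has size at least $c_0 n$ for some fixed $c_0>0$. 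The target event, that $(x^n,y^n,Z^n)$ is strongly jointly typical, is by definition the event that the triple counts satisfy $|N(a,b,c\,|\,x^n,y^n,Z^n)-nP_{X,Y,Z}(a,b,c)|\le n\delta$ for all symbol triples $(a,b,c)$, and this is where the Markov hypothesis enters through the identity $P_{X,Y,Z}(a,b,c)=P_{X,Y}(a,b)\,P_{Z|Y}(c\,|\,b)$, i.e. $P_{Z|X,Y}=P_{Z|Y}$.

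First I would observe that the constraint defining the conditioning set decouples across the value $b$. A sequence $z^n$ is strongly typical with $y^n$ precisely when, for every $(b,c)$, the count $N(b,c\,|\,y^n,z^n)$ of positions in $T_b$ carrying symbol $c$ is within the slack of $nP_{Y,Z}(b,c)$; this involves only the $z$-values on $T_b$, and different blocks $T_b,T_{b'}$ occupy disjoint positions. Hence the conditional typical set is a Cartesian product over $b$ of admissible sets $A_b$ of length-$N(b)$ arrangements whose composition lies within the slack of $P_{Z|Y}(\cdot\,|\,b)$, and a uniform draw $Z^n$ from it is the same as drawing, independently over $b$, a uniformly random arrangement from $A_b$. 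The essential structural fact is that, conditioned on a fixed realized composition, such a draw is a uniformly random permutation of the corresponding multiset, hence exchangeable in the positions of $T_b$.

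The main step is to control, for each $(a,b,c)$, the empirical frequency of $c$ among the positions $t\in T_{a,b}$. I would condition on the realized composition $q_b$ of $Z^n$ on the block $T_b$; by membership in the conditioning set this composition already satisfies $|q_b(c)-P_{Z|Y}(c\,|\,b)|\le\delta'$. Conditioned on $q_b$, the draw restricted to $T_b$ is a uniform permutation of the multiset with composition $q_b$, so the $z$-values on the fixed sub-block $T_{a,b}\subseteq T_b$ are a sample of size $N(a,b)$ drawn without replacement from that multiset. The count of symbol $c$ on $T_{a,b}$ is therefore hypergeometric, with conditional mean $N(a,b)\,q_b(c)$ and conditional variance at most $N(a,b)/4$, since sampling without replacement never exceeds the binomial variance. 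A Chebyshev bound then shows that the frequency of $c$ on $T_{a,b}$ deviates from $q_b(c)$ by more than $\eta$ with conditional probability at most $1/(4N(a,b)\eta^2)$, and combining with the composition estimate places this frequency within $\delta'+\eta$ of $P_{Z|Y}(c\,|\,b)$. Because the bound holds for every admissible $q_b$ it holds unconditionally, and because $N(a,b)\ge c_0 n$, the failure probability is $O(1/n)$ with an implied constant depending only on the alphabet sizes, $\eta$, and $P_{X,Y,Z}$ — never on the particular typical pair $(x^n,y^n)$. This is exactly where the uniformity over $(x^n,y^n)$ originates.

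To finish, I would union-bound over the finitely many triples $(a,b,c)$ (triples with $P_{X,Y}(a,b)=0$ contribute empty blocks that strong typicality forbids, so they are trivial), and assemble the triple-count estimate from $N(a,b,c)=\mathrm{freq}_{T_{a,b}}(c)\cdot N(a,b)\approx nP_{X,Y}(a,b)\,P_{Z|Y}(c\,|\,b)=nP_{X,Y,Z}(a,b,c)$, using the Markov identity at the last equality; choosing $\delta',\eta$ small relative to the target tolerance $\delta$ makes the slacks compose additively. Two bookkeeping points deserve care: one must order the typicality tolerances for the pair, for the conditioning set, and for the target triple so that the accumulated slack stays below $\delta$; and one must confirm that conditioning on the realized block composition, rather than on a single global type, is legitimate, which it is precisely because the conditioning set is a product over $b$. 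The genuinely delicate part of the argument is establishing this product-and-exchangeability structure and then verifying that the sampling-without-replacement concentration rate is uniform across all strongly typical $(x^n,y^n)$; once that reduction to a clean hypergeometric estimate is in place, the remaining steps are routine.
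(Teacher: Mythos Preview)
The paper does not prove this lemma; it merely quotes it from Chang (1978) and Kaspi (1979, 1982), with a footnote that Kaspi's version tracks the typicality tolerances more explicitly. So there is no in-paper argument to compare against.

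Your proposal is a sound direct proof. The decisive observations are exactly the right ones: (i) the set of $z^n$ that are strongly jointly typical with a fixed $y^n$ factorizes as a product over the $y$-blocks $T_b$, because each count $N(b,c\,|\,y^n,z^n)$ depends only on the $z$-entries in $T_b$; (ii) conditioned on the realized composition $q_b$ on a block, the arrangement is a uniform random permutation, so the restriction to the sub-block $T_{a,b}\subseteq T_b$ gives a hypergeometric count; and (iii) the Markov condition enters only through $P_{X,Y,Z}(a,b,c)=P_{X,Y}(a,b)\,P_{Z\mid Y}(c\mid b)$, which is what lets the empirical frequency $q_b(c)\approx P_{Z\mid Y}(c\mid b)$ on $T_{a,b}$ be the ``right'' frequency for the triple. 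The Chebyshev/hypergeometric bound depends on $(x^n,y^n)$ only through $N(a,b)\ge c_0 n$, which yields the uniformity.

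Two minor points to tighten when you write it out in full. First, the tolerances need to be threaded carefully: you work with a pair tolerance, a conditional-typicality tolerance for $(y^n,z^n)$, and a target triple tolerance, and you must choose them in the right order (target first, then the auxiliary $\eta$ and the input tolerances small enough) so that the accumulated slack is at most the target $\delta$; this is routine but should be stated. Second, your claim that $|q_b(c)-P_{Z\mid Y}(c\mid b)|\le\delta'$ follows from $|N(b,c)-nP_{Y,Z}(b,c)|\le n\delta$ only after dividing by $N(b)\approx nP_Y(b)$, so $\delta'$ carries a factor of $1/\min_{b:P_Y(b)>0}P_Y(b)$; just make that constant explicit. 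With those bookkeeping items in place the argument is complete.
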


\begin{IEEEproof}[Proof of Theorem \ref{theorem:DNSC}]

\noindent\textbf{Proof for the direction $\mathcal{D}_{dis}\supseteq\mathcal{D}^*_{dis}$}: To prove this direction, it suffices to show 
\begin{align}
\mathcal{D}_{dis}\supseteq\bigcup_{\{R_{i,j}:(i,j)\in\mathcal{E}\}:R_{i,j}<\kappa
  C_{i,j}}\mathcal{D}_{dis}(\{R_{i,j}\}_{(i,j)\in\mathcal{E}}).\label{eqn:forwardcondition2}
\end{align}
This is because the achievable distortion region $\mathcal{D}_{dis}$ is closed, and the distortion-matrix-rate function
$\mathcal{D}_{dis}(\{R_{i,j}\}_{(i,j)\in\mathcal{E}})$ is continuous in the relative interior of the non-negative quadrant (implied by its convexity), from which it follows that the
condition $R_{i,j}< \kappa C_{i,j}$ can be replaced by $R_{i,j}\leq
\kappa C_{i,j}$ in (\ref{eqn:forwardcondition2}), implying that $\mathcal{D}_{dis}\supseteq\mathcal{D}^*_{dis}$ where $\mathcal{D}^*_{dis}$ is defined in (\ref{eqn:DisRegionDef}). 

To show (\ref{eqn:forwardcondition2}), let $\epsilon>0$ be some quantity such that 
\begin{align}
\epsilon\leq \min_{(i,j)\in\mathcal{E}} (\kappa C_{i,j}-R_{i,j})\label{eqn:epsilon}
\end{align} 
for any chosen set of $\{R_{i,j},(i,j)\in\mathcal{E}\}$ such that $R_{i,j}<\kappa C_{i,j}$,  $(i,j)\in\mathcal{E}$. For any distortion vector $\{D_{k,j},k\in\mathscr{T}_j\}\in\mathcal{D}_{dis}(\{R_{i,j}\}_{(i,j)\in\mathcal{E}})$, since it is achievable with $\{R_{i,j},(i,j)\in\mathcal{E}\}$, there exists an $l$ such that for any sufficiently large $m$, there exists an $(m,l,\{L_{i,j},(i,j)\in\mathcal{E}\},\{D_{k,j}+\frac{1}{4}\epsilon,k\in\mathscr{T}_j\})$ distributed network source code (see Definition \ref{def:DDNSC}) with a total of $l$ sessions, where 
\begin{align}
R_{i,j}+\frac{\epsilon}{4}\geq \frac{l}{m}\log L_{i,j}. \label{eqn:LandR}
\end{align}
We utilize this source code together with a good channel code for each channel in the original network. 
More precisely, there are at least a total of $\kappa m-1$ channel uses available, and we
shall partition them into $l$ channel sessions, each with at least
$\left\lfloor \frac{\kappa m-1}{l}\right\rfloor$ channel uses. Thus the channel on edge $(i,j)$ in 
each session can support a message of cardinality $\left\lfloor 2^{\left\lfloor\frac{\kappa
  m-1}{l}\right\rfloor (C_{i,j}-\frac{1}{4}\epsilon)}\right\rfloor$, with maximum error probability (among all messages for each channel code) less than $\epsilon$, by choosing $m$ sufficiently large. Each
session of the pure source code has a message output of cardinality no larger than
$L_{i,j}$. Thus
 as long as 
 \begin{align}
 L_{i,j}\leq\left\lfloor 2^{\left\lfloor\frac{\kappa  m-1}{l}\right\rfloor (C_{i,j}-\frac{1}{4}\epsilon)}\right\rfloor, \label{eqn:forwardcondition}
  \end{align}
we can use the digital channel codes to transmit the source code indices with vanishing error probability. It follows that for any $\epsilon>0$, there exists a sufficiently large $m$ such that the total error probability over $l$-sessions is less than $l|\mathcal{E}|\epsilon$ in this network. For (\ref{eqn:forwardcondition}) to hold under the condition (\ref{eqn:LandR}), it suffices to have
\begin{align}
2^{\frac{m}{l} (R_{i,j}+\frac{1}{4}\epsilon)}\leq 2^{(\frac{\kappa m-1}{l}-1) (C_{i,j}-\frac{1}{4}\epsilon)}-1.\label{eqn:forwardcondition1}
\end{align}
Eqn. (\ref{eqn:epsilon}) implies that for any $\epsilon>0$, (\ref{eqn:forwardcondition1}) is true for any sufficiently large $m$ and the fixed $l$ afore-mentioned in the  $(m,l,\{L_{i,j},(i,j)\in\mathcal{E}\},\{D_{k,j}+\frac{1}{4}\epsilon,k\in\mathscr{T}_j\})$ distributed network source code, and subsequently (\ref{eqn:forwardcondition}) holds. Thus for any $\epsilon>0$, by choosing $m$ sufficiently large, the separation based scheme is able to achieve the distortion vector $\{D_{k,j}+\epsilon,k\in\mathscr{T}_j\}$ for any $\{D_{k,j},k\in\mathscr{T}_j\}\in\mathcal{D}_{dis}(\{R_{i,j}\}_{(i,j)\in\mathcal{E}})$  with probability greater than or equal to $(1-l|\mathcal{E}|\epsilon)$, and distortion $D_{\max}$ with probability less than or equal to $l|\mathcal{E}|\epsilon$, for any $R_{i,j}$ such that $\kappa C_{i,j}-R_{i,j}> \epsilon,\, (i,j)\in\mathcal{E}$, where $D_{\max}$ is the maximum distortion value for all the finite-alphabet sources in the network. Since $\epsilon$ can be made arbitrarily small and $D_{\max}$ is finite, and moreover $\mathcal{D}_{dis}$ is a closed set, (\ref{eqn:forwardcondition2}) is indeed true.

\begin{figure*}[tb]
\begin{centering}
\includegraphics[width=15cm]{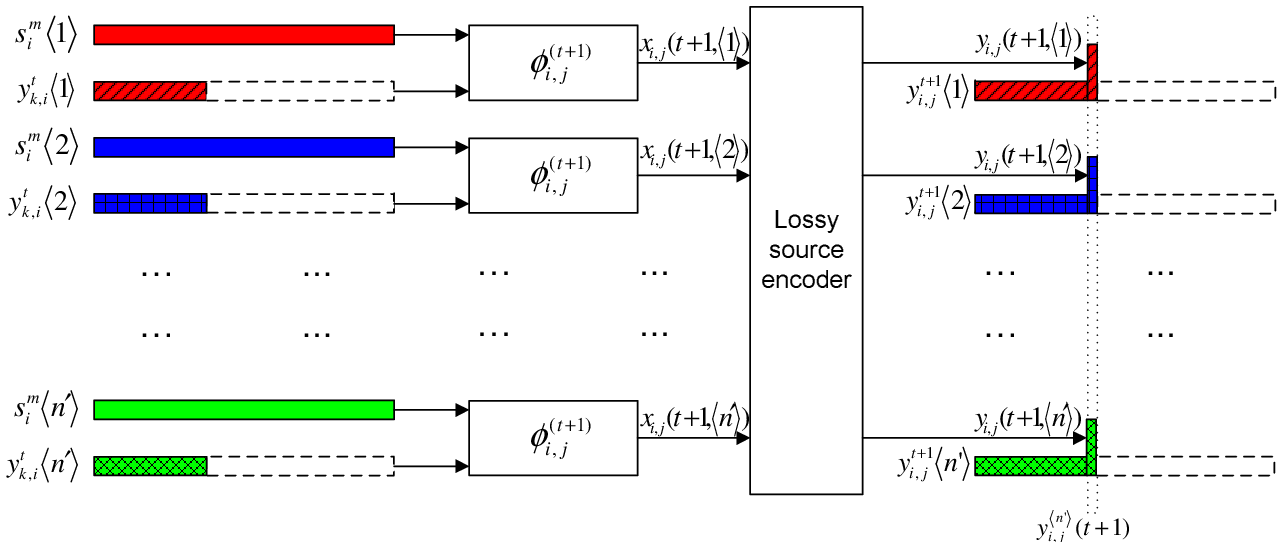}
\caption{Coding operation of $\mathbb{P}_s$ in session $t+1$ for node $i$
  with an incoming link $(k,i)$ and an outgoing link
  $(i,j)$.\label{fig:evolution} Each narrow horizontal box represents a vector; the vectors $y^{t}_{k,i}\langle v\rangle$'s and $y^{t+1}_{i,j}\langle v\rangle$'s are shaded partially because at this point, the later parts have not been generated. Each component of the lossy encoder output, {\em i.e.}, $y_{i,j}(t+1,\langle v\rangle)$, is appended to the existing $y^{t}_{i,j}\langle v\rangle$ to form  $y^{t+1}_{i,j}\langle v\rangle$.}
\end{centering}
\end{figure*}

\noindent\textbf{Proof for the direction $\mathcal{D}_{dis}\subseteq\mathcal{D}^*_{dis}$:} 
We wish to show that if a distortion matrix $\vec{D}$ is achievable in the joint coding problem $\mathbb{P}_j$
(Definitions \ref{def:DNSj1} and \ref{def:DNSj2}), then the rate distortion matrix pair $(\{\kappa C_{e}\},\vec{D})$ 
is also achievable in the source coding problem $\mathbb{P}_{s}$ (Definitions \ref{def:DDNSC} 
and \ref{def:sourcecodingRD}).
For this purpose, we construct an $n$-session distributed network source code for $\mathbb{P}_{s}$ that
operates on a source sequence of length $mn'$ from a joint coding code. 
For any achievable distortion matrix $\vec{D}$  and any $\epsilon>0$, there exists an $(m,n,\{D_{k,j}+\epsilon,k\in\mathscr{T}_j\})$ distributed network joint source-channel code 
(see Definitions \ref{def:DNSj1} and \ref{def:DNSj2}), where $n\leq\kappa m$. Let us fix this joint coding code, and use it to construct a source code for $\mathbb{P}_s$.

First partition the source sequence $S^{mn'}_i$, $i=1,2,\ldots,N$, into $n'$
disjoint block components, each of length $m$. The
$v$-th block component of $S^{mn'}_i$ is written as $S^m_i\langle
v\rangle$, {\em i.e.},
\begin{align*}
&S^m_i\langle v\rangle\triangleq\bigg{(}S_i((v-1)m+1),S_i((v-1)m+2),\nonumber\\
&\qquad\qquad\qquad\qquad\qquad\qquad\qquad\qquad\ldots,S_i(vm)\bigg{)},\nonumber\\
&\qquad\qquad\qquad\quad\qquad\qquad\qquad v=1,2,\ldots,n\rq{}.
\end{align*}
To make this partition explicit, $S^{mn'}_i$ is written in the sequel as $S^{m,\langle n'\rangle}_i$.

\textbf{Codebook generation:} For each $(i,j)\in\mathcal{E}$ and each {\em session} 
$t=1,2,\ldots,n$, a source coding codebook
$\mathcal{C}_{(i,j),t}$ of size $2^{n'(I(X_{i,j}(t);Y_{i,j}(t))+\delta)}$ is generated
by choosing from the strongly typical set of the random variable $Y_{i,j}(t)$ uniformly at random with replacement, where $\delta>0$ is a small quantity $\delta\rightarrow 0$ as $n\rq{}\rightarrow \infty$. 
This codebook is revealed to both the encoder and the
decoder on edge $(i,j)$ in the problem $\mathbb{P}_s$. 

\textbf{Encoding and decoding:} For session $t=1$ at any given edge
$(i,j)\in\mathcal{E}$, we first apply the chosen joint source
channel encoding function $\phi^{(1)}_{i,j}$ on each block component
$S^m_i\langle v\rangle$, $v=1,2,\ldots,n'$; denote the output $\phi^{(1)}_{i,j}(s^m_i\langle v\rangle)$ as $x_{i,j}(1,\langle v\rangle)$. The following length-$n'$ vector is formed 
by concatenating them 
\begin{align}
x^{\langle n'\rangle}_{i,j}(1)\triangleq (x_{i,j}(1,\langle 1\rangle),x_{i,j}(1,\langle 2\rangle),\ldots,x_{i,j}(1,\langle n'\rangle)).
\end{align}

For each $(i,j)\in\mathcal{E}$, if $x^{\langle n'\rangle}_{i,j}(1)$ is strongly typical, we find a codeword $y^{\langle n'\rangle}_{i,j}(1)$ in
$\mathcal{C}_{(i,j),1}$ such that $x^{\langle n'\rangle}_{i,j}(1)$ and $y^{\langle n'\rangle}_{i,j}(1)$ are
strongly jointly typical with respect to $P(X_{i,j}(1),Y_{i,j}(1))$; if there does not exist such a codeword, an error is declared. Denote the index of this chosen $y^{\langle n'\rangle}_{i,j}(1)$ codeword as $w_{i,j}(1)$; the $v$-th location in the vector $y^{\langle n'\rangle}_{i,j}(1)$ is written as $y_{i,j}(1,\langle v\rangle)$.
The encoding functions $\tilde{\phi}^{(1)}_{i,j}$ for $\mathbb{P}_s$ are given by 
\begin{align}
\tilde{\phi}^{(1)}_{i,j}\left(s^{m,\langle n'\rangle}_i\right)=w_{i,j}(1),\qquad (i,j)\in\mathcal{E}.
\end{align}

In the $t$-th session, for any given edge
$(i,j)\in\mathcal{E}$, the chosen joint source-channel
encoding function $\phi^{(t)}_{i,j}$ is applied, and the outputs are concatenated (see Fig. \ref{fig:evolution}), {\em i.e.},
\begin{align}
x^{\langle n'\rangle}_{i,j}(t)=&\bigg{(}\phi^{(t)}_{i,j}(s^m_i\langle 1\rangle,\{y^{t-1}_{k,i}\langle 1\rangle,(k,i)\in\mathcal{E}\}),\nonumber\\
&\quad\phi^{(t)}_{i,j}(s^m_i\langle 2\rangle,\{y^{t-1}_{k,i}\langle 2\rangle,(k,i)\in\mathcal{E}\}),\nonumber\\
&\quad\ldots,\phi^{(t)}_{i,j}(s^m_i\langle n'\rangle,\{y^{t-1}_{k,i}\langle n'\rangle,(k,i)\in\mathcal{E}\})\bigg{)},
\end{align}
where
\begin{align}
y^{t-1}_{i,j}{\langle v\rangle}\triangleq(y_{i,j}(1,\langle v\rangle),y_{i,j}(2,\langle v\rangle),\ldots,y_{i,j}(t-1,\langle v\rangle)).
\end{align}
For any $(i,j)\in\mathcal{E}$, if 
$x^{\langle n'\rangle}_{i,j}(t)$ is strongly typical, find a codeword $y^{\langle n'\rangle}_{i,j}(t)$ in $\mathcal{C}_{(i,j),t}$ such that
$x^{\langle n'\rangle}_{i,j}(t)$ and $y^{\langle n'\rangle}_{i,j}(t)$ are strongly jointly typical with
respect to $P(X_{i,j}(t),Y_{i,j}(t))$; if there does not exist such a codeword, an error is declared. The index of the chosen codeword $y^{\langle n'\rangle}_{i,j}(t)$ in $\mathcal{C}_{(i,j),t}$ is denoted as $w_{i,j}(t)$, and thus the encoding functions $\tilde{\phi}^{(t)}_{i,j}$ for $\mathbb{P}_s$ are 
\begin{align}
\tilde{\phi}^{(t)}_{i,j}\left(s^{m,\langle n'\rangle}_i,\{w^{t-1}_{k,i},(k,i)\in\mathcal{E}\}\right)=w_{i,j}(t),\qquad (i,j)\in\mathcal{E}.
\end{align}

After $n$ sessions of encoding, at node $j\in\mathcal{V}$,
the chosen joint source-channel decoding function
$\psi_{k,j}$ is applied to reconstruct the $v$-th block component of source
$s^{m,\langle n'\rangle}_k$, $k\in\mathscr{T}_j$, {\em i.e.},
\begin{align}
&\hat{s}^{m}_{k,j}\langle v\rangle=\psi_{k,j}(s^{m}_j\langle v\rangle,\{y^n_{i,j}\langle v\rangle,(i,j)\in\mathcal{E}\}), \nonumber\\
&\qquad\qquad\qquad\qquad\qquad v=1,2,\ldots,n',
\end{align}
which are then concatenated to form $\hat{s}^{m,\langle n'\rangle}_{k,j}$, {\em i.e.}, the length-$mn'$ reconstruction of source $k$ at node $j$. Thus the decoding functions $\tilde{\psi}_{k,j}$ for $\mathbb{P}_s$ are given as 
\begin{align}
\tilde\psi_{k,j}(s^{m,\langle n'\rangle}_j,\{w^{n}_{i,j},(i,j)\in\mathcal{E}\})=\hat{s}^{m,\langle n'\rangle}_{k,j},\quad k\in\mathscr{T}_j.
\end{align}

\textbf{Error probability and distortion analysis:} There are three kinds of error events in session-$t$
\begin{itemize}
\item $E^{(1)}_{t}$: $(s^{m,\langle n'\rangle}_{\mathcal{V}},x^{t,\langle n'\rangle}_{\mathcal{E}},y^{t-1,\langle n'\rangle}_{\mathcal{E}})$ are not strongly jointly typical with respect to $P(S^m_{\mathcal{V}},X^{t}_{\mathcal{E}},Y^{t-1}_{\mathcal{E}})$; 
\item $E^{(2)}_{t,(i,j)}$: for an edge $(i,j)\in\mathcal{E}$, given $x^{\langle n'\rangle}_{i,j}(t)$ is strongly typical, there does not exist any codeword in
$\mathcal{C}_{(i,j),t}$ such that it is strongly jointly typical with $x^{\langle n'\rangle}_{i,j}(t)$ with respect to $P(X_{i,j}(t),Y_{i,j}(t))$;
\item $E^{(3)}_{t}$: $(s^{m,\langle n'\rangle}_{\mathcal{V}},x^{t,\langle n'\rangle}_{\mathcal{E}},y^{t-1,\langle n'\rangle}_{\mathcal{E}})$ and $y^{\langle n'\rangle}_{\mathcal{E}}(t)$ are not strongly jointly typical with respect to
$P(S^m_{\mathcal{V}},X^t_{\mathcal{E}},Y^t_{\mathcal{E}})$.
\end{itemize}

Note $E^{(3)}_0$ is the event that $s^{m,\langle n'\rangle}_{\mathcal{V}}$ is not strongly jointly typical. The overall error event is given as 
\begin{align}
E_{n'}=&\bigcup_{t=1}^{n}(E^{(1)}_{t} \cup \bigcup_{(i,j)\in\mathcal{E}} E^{(2)}_{t,(i,j)}\cup E^{(3)}_{t})\nonumber\\
=&\bigcup_{t=1}^{n}\bigg{(}\overline{E^{(3)}_{t-1}}\cap E^{(1)}_{t}\bigg{)} \cup \bigg{(}\overline{E^{(1)}_{t}} \cap \bigcup_{(i,j)\in\mathcal{E}} E^{(2)}_{t,(i,j)}\bigg{)} \nonumber\\
&\qquad\qquad\cup \bigg{(}\overline{E^{(1)}_t \cup \bigcup_{(i,j)\in\mathcal{E}} E^{(2)}_{t,(i,j)}}\cap E^{(3)}_{t}\bigg{)},
\end{align}
where $\overline{S}$ is the complement of $S$. 

By the union bound, we have
\begin{align}
\mbox{Pr}(E_{n'})\leq& \sum_{t=1}^n \mbox{Pr}(\overline{E^{(3)}_{t-1}}\cap E^{(1)}_{t})\nonumber\\
&\,+\sum_{t=1}^n \mbox{Pr}\bigg{(}\overline{E^{(1)}_{t}} \cap \bigcup_{(i,j)\in\mathcal{E}} E^{(2)}_{t,(i,j)}\bigg{)}\nonumber\\
&\quad+\sum_{1}^n \mbox{Pr}\bigg{(}\overline{E^{(1)}_t \cup \bigcup_{(i,j)\in\mathcal{E}} E^{(2)}_{t,(i,j)}}\cap E^{(3)}_{t}\bigg{)}\label{eqn:errorevents}.
\end{align}
Next we show that $\mbox{Pr}(E_{n'})\rightarrow 0$ as $n'\rightarrow \infty$. Firstly, $\mbox{Pr}(E^{(3)}_0)\rightarrow 0$ by the basic properties of the strongly jointly typical sequences (\cite{CoverThomas}, pp. 358-362).
Since $x^{\langle n'\rangle}_{\mathcal{E}}(1)$ is a deterministic function of $s^{m,\langle n'\rangle}_{\mathcal{V}}$, $\mbox{Pr}(\overline{E^{(3)}_{0}}\cap E^{(1)}_{1})\rightarrow 0$, and similarly 
$\mbox{Pr}(\overline{E^{(3)}_{t-1}}\cap E^{(1)}_{t})\rightarrow 0$ for $t=2,3,\ldots,n$. For the second summation in (\ref{eqn:errorevents}), 
\begin{align}
&\sum_{t=1}^n \mbox{Pr}\bigg{(}\overline{E^{(1)}_{t}} \cap \bigcup_{(i,j)\in\mathcal{E}} E^{(2)}_{t,(i,j)}\bigg{)}\nonumber\\
&\qquad\qquad\qquad\leq \sum_{t=1}^n \sum_{(i,j)\in\mathcal{E}}\mbox{Pr}(\overline{E^{(1)}_{t}} \cap E^{(2)}_{t,(i,j)}),
\end{align}
by the union bound.
Since $\overline{E^{(1)}_{t}}$ implies that $x^{\langle n'\rangle}_{i,j}(t)$ is strongly typical, $\mbox{Pr}(\overline{E^{(1)}_{t}} \cap E^{(2)}_{t,(i,j)})\rightarrow 0$ for any $t$ and $(i,j)\in \mathcal{E}$, by the properties of the strongly typical sequences (\cite{CoverThomas}, Lemma 13.6.2), and the fact that the number of codewords in $\mathcal{C}_{(i,j),t}$ is $2^{n'(I(X_{i,j}(t);Y_{i,j}(t))+\delta)}$. 

To bound the third summation in (\ref{eqn:errorevents}), let us fix an arbitrary order for the edges in the set $\mathcal{E}$, and write it as $e_1,e_2,\ldots,e_{|\mathcal{E}|}$. Define $E^{(3)}_{t,k}$ as the event that $(s^{m,\langle n'\rangle}_{\mathcal{V}},x^{t,\langle n'\rangle}_{\mathcal{E}},y^{t-1,\langle n'\rangle}_{\mathcal{E}})$ and $(y^{\langle n'\rangle}_{e_1}(t),y^{\langle n'\rangle}_{e_2}(t),\ldots,y^{\langle n'\rangle}_{e_k}(t))$ are not strongly jointly typical.
We can then rewrite
\begin{align}
&\overline{E^{(1)}_t \cup \bigcup_{(i,j)\in\mathcal{E}} E^{(2)}_{t,(i,j)}}\cap E^{(3)}_{t}\nonumber\\
&=\bigcup_{k=1}^{|\mathcal{E}|} \overline{E^{(1)}_t \cup \bigcup_{(i,j)\in\mathcal{E}} E^{(2)}_{t,(i,j)}\cup E^{(3)}_{t,k-1}}\cap E^{(3)}_{t,k}\triangleq  \bigcup_{k=1}^{|\mathcal{E}|}E^{(3)*}_{t,k},
\end{align}
where $E^{(3)}_{t,0}\triangleq\emptyset$. To bound $\mbox{Pr}(E^{(3)*}_{t,k})$, observe that 
\begin{align}
&\left(S^{m}_{\mathcal{V}},X^{t-1}_{\mathcal{E}},X_{\mathcal{E}\setminus e_k}(t),Y^{t-1}_{\mathcal{E}},Y_{e_1,e_2,\ldots,e_{k-1}}(t)\right)\nonumber\\
&\qquad\qquad\qquad\qquad\leftrightarrow X_{e_{k}}(t) \leftrightarrow Y_{e_{k}}(t)
\end{align}
is a Markov string. 
Invoking Lemma \ref{lemma:markov} gives that $\mbox{Pr}(E^{(3)*}_{t,k})\rightarrow 0$, for any $t$ and $k$, as $n'\rightarrow \infty$.

There are a total of $n$ terms in the first summation of (\ref{eqn:errorevents}), $n|\mathcal{E}|$ terms in the second, and $n|\mathcal{E}|$ terms in the third. Since $n$ and $|\mathcal{E}|$ are fixed here, and each term can be made arbitrarily small by making $n'$ sufficiently large, we have $\mbox{Pr}(E_{n'})\rightarrow 0$ as $n'\rightarrow \infty$. This implies that the sequences 
$(s^{m,\langle n'\rangle}_{\mathcal{V}},x^{m,\langle n'\rangle}_{\mathcal{E}},y^{m,\langle n'\rangle}_{\mathcal{E}})$
are strongly jointly typical with respect to the original distribution
$P(S^{m}_{\mathcal{V}},X^n_{\mathcal{E}},Y^n_{\mathcal{E}})$ with probability arbitrarily close to one as $n'\rightarrow \infty$.
This further implies that $s^{m,\langle n'\rangle}_{k}$ and $\hat{s}^{m,\langle n'\rangle}_{k,j}$ are strongly jointly typical with respect to $P(S^m_k,\hat{S}^m_{k,j})$, and the new code induces a distortion $D_{k,j}+\epsilon+\delta'$, where $\delta'\rightarrow 0$ as $n'\rightarrow \infty$.

\textbf{Rate analysis:} In the chosen $(m,n,\{D_{k,j}+\epsilon,k\in\mathscr{T}_j\})$ joint source-channel code for $\mathbb{P}_j$, for each link $e=(i,j)\in\mathcal{E}$, the conventional   channel coding theorem implies
\begin{align}
I(X_e(t);Y_e(t))\leq C_{e},\quad t=1,2,\ldots,n,
\end{align}
where $C_e$ is the capacity of the channel on edge $e$.  
Thus the cardinality of above-constructed source code for $\mathbb{P}_s$ in each 
session associated with any given link $e$ is bounded as 
\begin{align}
2^{n'(I(X_e(t);Y_e(t))+\delta)}\leq 2^{n'(C_e+\delta)},\quad t=1,2,\ldots,n.
\end{align}
It follows according to Definition \ref{def:sourcecodingRD} that the following rate is achievable in problem $\mathbb{P}_s$
\begin{align}
R_{e}=\frac{n}{mn'}\log2^{n'(C_e+\delta)}\leq \kappa (C_e+\delta).
\end{align}

\textbf{Finishing the proof $\mathcal{D}_{dis}\subseteq\mathcal{D}^*_{dis}$:} We have shown that by utilizing a chosen joint source-channel code $(m,n,\{D_{k,j}+\epsilon,k\in\mathscr{T}_j\})$ for $\mathbb{P}_j$, the constructed sequence of source codes for $\mathbb{P}_s$ can operate at the rate-distortion-matrix tuple $(\{R_{i,j}=\kappa (C_{i,j}+\delta), (i,j)\in \mathcal{E}\},\vec{D}+\epsilon+\delta')$, where $\delta$ and $\delta'$ can be made arbitrarily small by letting $n'\rightarrow \infty$. 
Since the achievable rate-distortion-matrix region for $\mathbb{P}_s$ is a closed set, the tuple $(\{R_{i,j}=\kappa C_{i,j},(i,j)\in \mathcal{E}\},\vec{D}+\epsilon)$ is achievable in $\mathbb{P}_s$. Since the distortion matrix $\vec{D}$ is achievable in $\mathbb{P}_j$, for any $\epsilon>0$, there exists an $(m,n,\{D_{k,j}+\epsilon,k\in\mathscr{T}_j\})$ joint source-channel code, where $n\leq\kappa m$ by choosing $n$ sufficiently large. Thus for any $\epsilon>0$, $(\{R_{i,j}=\kappa C_{i,j},(i,j)\in \mathcal{E}\},\vec{D}+\epsilon)$ is achievable in $\mathbb{P}_s$. Again by the fact that the achievable rate-distortion-matrix region for $\mathbb{P}_s$ is closed, the tuple $(\{R_{e}=\kappa C_{e},e\in \mathcal{E}\},\vec{D})$ is achievable for $\mathbb{P}_s$.  Applying (\ref{eqn:DisRegionDef}) now completes the proof for $\mathcal{D}_{dis}\subseteq\mathcal{D}^*_{dis}$.  
\end{IEEEproof}

\section{Optimality of Separation for Joint Source-Channel Multiple Unicast with Distortions}
\label{sec:proofunicast}

The following theorem formally states that source-channel separation is optimal in the JSCMUD problem. Recall $\mathcal{D}_{uni}$ and $\mathcal{D}^*_{uni}$ given in Definition \ref{def:JScodeUnicastRegion} and Eqn. (\ref{eqn:Duni}), respectively.
\begin{theorem}
\label{theorem:main}
$\mathcal{D}_{uni}=\mathcal{D}^*_{uni}$.
\end{theorem}

\begin{IEEEproof}[Proof of Theorem \ref{theorem:main}]
The direction $\mathcal{D}_{uni}\supseteq\mathcal{D}^*_{uni}$ is
rather obvious except one technicality. The channel coding problem given in 
Definitions \ref{def:DigChannelcodeUnicast} and \ref{def:diguniregion} has an error probability defined as averaged over all messages. However, the codeword indices for the source codes may not have a uniform distribution, and thus the overall error probability by combing the source code and the channel code may be larger if the mapping between the source code indices and the channel code indices are chosen poorly. This however can be resolved using a standard random coding argument \cite{CoverThomas} over all possible one-to-one mappings, and the detail is thus omitted.    

We next focus on the other
direction $\mathcal{D}_{uni}\subseteq\mathcal{D}^*_{uni}$.  For any
achievable distortion vector $(D_1,D_2,\ldots,D_M)$, and any $\epsilon>0$,
there exists an $(m,n,D_1+\epsilon,D_2+\epsilon,\ldots,D_M+\epsilon)$ JSCMUD code, 
where $n\leq \kappa m$ (see Definitions \ref{def:JScodeUnicast} and \ref{def:JScodeUnicastRegion}). 
The sources and the above given block code induce a joint distribution
\begin{align}
\label{eqn:superdistribution}
\prod_{i=1}^MP(S^m_i)\cdot P\bigg{(}\hat{S}^m_1,\hat{S}^m_2,\ldots,\hat{S}^m_M\bigg{|}S^{m}_1,S^{m}_2,\ldots,S^{m}_M\bigg{)},
\end{align}
and the second term can be viewed as the transition probability of a
block-level interference channel, which has input alphabets
$\mathcal{S}^{m}_1\times\mathcal{S}^{m}_2\times\ldots\times\mathcal{S}^{m}_M$, and output
alphabets $\hat{\mathcal{S}}^{m}_1\times\hat{\mathcal{S}}^{m}_2\times\ldots\times\hat{\mathcal{S}}^{m}_M$. Moreover,
by the conventional rate-distortion theorem \cite{CoverThomas}, 
\begin{align}
\label{eqn:mutualinformation}
I(S^{m}_i;\hat{S}^{m}_i)\geq m
R_{i}(D_i+\epsilon),\qquad i=1,2,\ldots,M,
\end{align}
where $R_i(\cdot)$ is the rate-distortion function for source $S_i$. This super interference 
channel operates in the same manner as a memoryless interference channel, however on a block level $(S^{m}_1,S^{m}_2,\ldots,S^{m}_M)\rightarrow (\hat{S}^m_1,\hat{S}^m_2,\ldots,\hat{S}^m_M)$, 
instead of on a single time instance level $(X_1,X_2,\cdots,X_N)\rightarrow (Y_1,Y_2,\cdots,Y_N)$.

Next we show that if a distortion vector $(D_1,D_2,\ldots,D_M)$ is achievable on the joint coding problem $\mathbb{P}_j$ (Definitions \ref{def:JScodeUnicast} and \ref{def:JScodeUnicastRegion}), then the rate vector $(R_1(D_1),R_1(D_1),\ldots,R_M(D_M))$ 
is achievable on the channel coding problem   $\mathbb{P}_c$ (Definitions \ref{def:DigChannelcodeUnicast} and \ref{def:diguniregion}).
For this purpose, we construct a multiple unicast channel code for $\mathbb{P}_c$ 
using the afore-mentioned $(m,n,D_1+\epsilon,D_2+\epsilon,\ldots,D_M+\epsilon)$ joint source-channel code
for $\mathbb{P}_j$. The coding scheme for $\mathbb{P}_c$ can be formally described as follows.

\textbf{Codebook generation:} For each source $S_i$, $2^{mn' (R_{i}(D_i+\epsilon)-\delta)}$ codewords 
of length-$(mn')$ are generated independently, according to the $mn'$-th product distribution of $P(S_i)$; 
denote this codebook as $\mathcal{C}_i$. The codebooks are revealed to 
all the nodes.

\textbf{Encoding:} To encode for $\mathbb{P}_c$, for a message $w_i$, choose the $w_i$-th codeword $s^{mn'}_i(w_i)$ in the $\mathcal{C}_i$ codebook generated above. Each codeword is partitioned into $n'$ blocks of equal length, and denote the $v$-th block as $s^m_i(w_i,\langle v\rangle)$; to emphasize this partition, we also write $s^{mn'}_i(w_i)$ as $s^{m,\langle n'\rangle}_i(w_i)$. 
For a fixed $v$, the blocks $(s^m_1(w_1,\langle v\rangle),s^m_2(w_2,\langle v\rangle),\ldots,s^m_M(w_M,\langle v\rangle))$ from the chosen codewords at all the nodes can be viewed as the length-$m$ source vectors in $\mathbb{P}_j$, and thus the chosen $(m,n,D_1+\epsilon,D_2+\epsilon,\ldots,D_M+\epsilon)$ JSCMUD encoding functions and decoding functions can be used on them. This results in a set of reconstruction sequences $(\hat{s}^m_1\langle v\rangle,\hat{s}^m_2\langle v\rangle,\ldots,\hat{s}^m_M\langle v\rangle)$. At the end of $n'$ blocks, we concatenate the reconstruction for each source block as $\hat{s}^{mn'}_i=\hat{s}^{m,\langle n'\rangle}_i=(\hat{s}^{m}_i\langle 1\rangle,\hat{s}^{m}_i\langle 2\rangle,\ldots,\hat{s}^{m}_i\langle n'\rangle)$.

Mathematically, let the chosen joint source-channel encoding function and decoding function at node $j$ be $\phi^{(t)}_{j}$ and $\psi_{k,j}$, respectively. Similarly as the notation of $s^m_i(w_i,\langle v\rangle)$, the $v$-th length-$n$ block of $y^{n,\langle n'\rangle}_i$ is written as $y^n_i\langle v\rangle$, and the first $t$ symbols of the block $y^n_i\langle v\rangle$ is written as $y^t_i\langle v\rangle$.
Then the new channel code encoding function $\tilde{\phi}^{(t')}_{j}$ is given by
\begin{align}
&\tilde{\phi}^{((v-1)n+t)}_{j}\bigg{(}\{w_i,i\in\mathscr{S}_j\},y^{(v-1)n+t-1}_j\bigg{)}\nonumber\\
&\qquad\qquad=\phi^{(t)}_{j}\bigg{(}\{s^m_i(w_i,\langle v\rangle),i\in\mathscr{S}_j\},y^{t-1}_j\langle v\rangle\bigg{)},\nonumber\\
&\qquad\qquad\qquad\qquad v=1,2,\ldots,n',\quad t=1,2,\ldots,n.
\end{align}
The reconstructions are 
\begin{align}
&\hat{s}^{m}_j\langle v\rangle = \psi_{k,j}\bigg{(}\{s^m_i(w_i,\langle v\rangle),i\in\mathscr{S}_j\},y^{n}_j\langle v\rangle\bigg{)},\quad k\in\mathscr{T}_j.
\end{align}

\textbf{Decoding:} At node $j$, for which $k\in \mathscr{S}_j$, find a unique codeword in the codebook $\mathcal{C}_k$ such that it is (weakly) jointly typical \cite{CoverThomas} with $\hat{s}^{m,\langle n'\rangle}_k$ according to the distribution $P(S^{m}_k,\hat{S}^{m}_k)$, {\em i.e.}, the marginal from (\ref{eqn:superdistribution}). If there is a unique one, the corresponding message $w^*_k$ is declared; otherwise an error is declared.

\textbf{Error probability analysis:} There are three kinds of errors 
\begin{itemize}
\item $E^{(1)}$: $(s^{m,\langle n'\rangle}_1(w_1),s^{m,\langle n'\rangle}_2(w_2),\ldots,s^{m,\langle n'\rangle}_M(w_M))$ are not jointly typical with respect to (\ref{eqn:superdistribution});
\item $E^{(2)}$: $(s^{m,\langle n'\rangle}_1(w_1),s^{m,\langle n'\rangle}_2(w_2),\ldots,s^{m,\langle n'\rangle}_M(w_M),$ $\hat{s}^{m,\langle n'\rangle}_1,\hat{s}^{m,\langle n'\rangle}_2,\ldots,\hat{s}^{m,\langle n'\rangle}_M)$ are not jointly typical with respect to (\ref{eqn:superdistribution}); 
\item $E^{(3)}_i$: for a given message $w_i$, there is more than one codeword in $\mathcal{C}_i$ that is jointly typical with $\hat{s}^{m,\langle n'\rangle}_i(w_i)$, with respect to the marginal of (\ref{eqn:superdistribution}).
\end{itemize}

By the union bound, the overall error probability can be bounded as
\begin{align}
&\mbox{Pr}(E_{n'})\leq \mbox{Pr}(E^{(1)})+\mbox{Pr}(\overline{E^{(1)}}\cap E^{(2)})\nonumber\\
&\qquad\qquad\qquad+\sum_{i=1}^M\mbox{Pr}(\overline{E^{(2)}}\cap E^{(3)}_i).
\label{eqn:JSCMUDerror}
\end{align}

Since all the codewords are generated according to $P(S_i)$'s independently, by the basic properties of the jointly typical sequences (\cite{CoverThomas}, Theorem 14.2.1), $\mbox{Pr}(E^{(1)})\rightarrow 0$ as $n'\rightarrow\infty$. This implies that
the reconstructions $\{\hat{s}^{m,\langle n'\rangle}_i,i=1,2,\ldots,M\}$ are jointly typical with $\{s^{m,\langle n'\rangle}_i(w_i),i=1,2,\ldots,M\}$ with probability approaching one, {\em i.e.}, $\mbox{Pr}(\overline{E^{(1)}}\cap E^{(2)})\rightarrow 0$ as $n'\rightarrow\infty$. It follows that $\mbox{Pr}(\overline{E^{(2)}}\cap E^{(3)}_i)\rightarrow0$ as $n'\rightarrow\infty$, by (\ref{eqn:mutualinformation}) and the basic property of the jointly typical sequences (\cite{CoverThomas}, Theorem 14.2.1 and Theorem 14.2.2), and the fact that the number of codewords in $\mathcal{C}_i$ is $2^{mn' (R_{i}(D_i+\epsilon)-\delta)}$. Since there are a total of $M+2$ terms in (\ref{eqn:JSCMUDerror}), 
$\mbox{Pr}(E_{n'})\rightarrow0$ as $n'\rightarrow\infty$.

\textbf{Finishing the proof $\mathcal{D}_{uni}\subseteq\mathcal{D}^*_{uni}$:} We have shown that by fixing a joint source-channel code $(m,n,D_1+\epsilon,D_2+\epsilon,\ldots,D_M+\epsilon)$ for $\mathbb{P}_j$, the constructed sequence of channel codes can operate at rate tuple $(R_1(D_1+\epsilon)-\delta,R_2(D_2+\epsilon)-\delta,\ldots,R_M(D_M+\epsilon)-\delta)$ for $\mathbb{P}_c$, where $\delta$ can be made arbitrarily small by letting $n'\rightarrow \infty$. Since the set $\mathcal{C}_{uni}$ is closed, the rate tuple $(R_1(D_1+\epsilon),R_2(D_2+\epsilon),\ldots,R_M(D_M+\epsilon))\in \mathcal{C}_{uni}$. Since the rate-distortion functions $R_i(\cdot)$'s are continuous and the capacity region $\mathcal{C}_{uni}$ is closed, we have $(R_1(D_1),R_2(D_2),\ldots,R_M(D_M))\in \mathcal{C}_{uni}$. It follows that $(D_1,D_2,\ldots,D_M)\in \mathcal{D}^*_{uni}$ by the definition of $\mathcal{D}^*_{uni}$ in (\ref{eqn:Duni}), and thus $\mathcal{D}_{uni}\subseteq\mathcal{D}^*_{uni}$.  This completes the proof.
\end{IEEEproof}

\section{Approximate Optimality of Separation for Joint Source-Channel Multiple Multicast with Distortions}
\label{sec:proofmulticast}

In this section the third scenario where there could be multiple receivers
interested in the same source at different distortion levels is examined. 
We limit ourselves to a set of distortion measures referred to as the
``difference" distortion measures, whose properties play an important role in the proof. 
More precisely, $\hat{\mathcal{X}}=\mathcal{X}$ in this class of distortion measures, where $\mathcal{X}$ is an Abelian group with a proper addition operation; furthermore,
the distortion mapping $d(x,\hat{x})$ is a function of $x-\hat{x}$, and we shall write it as $d(x,\hat{x})=d(x-\hat{x})$.

Some necessary definitions are quoted next from \cite{Zamir:96}. For random
variables $N$ and $X$ in the alphabet $\mathcal{X}$, the capacity of the
additive noise channel $X\rightarrow X + N$, under a $d(\cdot)$ distortion constraint is defined as
\begin{align}
\label{eq:Cdef}
C(D, N) = \sup_{X:X\bot N, \mathbb{E}{d(X)}\leq D} I(X; X + N).
\end{align}
The addition $+$ is in the Abelian group $\mathcal{X}$ (e.g., real addition, modulo addition or finite 
field addition), and $\bot$ stands for
independence. 
The minimax (or worst noise) capacity is defined as
\begin{align}
\label{eq:DefCxD}
C_{\mathcal{X}}(D)=\inf_{\mathbb{E}{d(N)}\leq D} C(D,N).
\end{align}
$C_{\mathcal{X}}(D)$ can be interpreted as the capacity at equilibrium
in a mutual information jammer game, played over an
additive-noise channel, in which both the expected noise and expected input are
limited to within $D$ in terms of $d(\cdot)$. The quantity
$C_{\mathcal{X}}(D)$ is a function of $D$ in general, however simplification is possible in some cases. 
Particularly, when the
distortion is the mean squared error, $C_{\mathcal{X}}(D)$ is always $0.5$
bit \cite{Zamir:96}.

Our approximation result is in a genie-aided form, 
where additional communication links with bounded 
capacities are provided by a genie. 
We show that a separation-based approach using the original 
communication network together with the additional 
genie-provided communication links can achieve any 
distortion matrix $\vec{D}$ that is achievable in the 
original communication network 
with arbitrary joint coding schemes. 
It will become clear in the proof that if the reconstructions of $S_i$ at multiple destinations in the set $\mathscr{Q}_i$ are 
required to be at the same distortion level a priori, 
then these destinations can be viewed as a single super-destination, and the problem can be 
reduced; therefore, without loss of generality  they are assumed to be at different distortion levels.

The decreasing sequence of distortions for the elements on the $i$-th row in the distortion matrix specifies 
an order $O_i$ of the set $\mathscr{Q}_i$; let $O_i(j)$ be the $j$-th element 
in the set of $\mathscr{Q}_i$ according to the order $O_i$. 
We require these genie-provided links to support degraded message set broadcast from source $S_i$ to the nodes in the set $\mathscr{Q}_i$ for each $i$ where $|\mathscr{Q}_i|>1$: for such a source $S_i$, for each $j\in \mathcal{I}_{|\mathscr{Q}_i|}$, there is a common link of capacity $R_{i,O_i(j)}$ per source sample\footnote{If $S_i$ is present at more than one node, {\em i.e.},  $|\{k:i\in \mathscr{S}_k\}|>1$, then $R_{i,O_i(j)}$ should be the sum rate per source sample of such common links from each of the node in $\{k:i\in \mathscr{S}_k\}$ to all the nodes $O_i(j),O_i(j+1),\ldots,O_i(|\mathscr{Q}_i|)$.} from $S_i$ to all the nodes $O_i(j),O_i(j+1),\ldots,O_i(|\mathscr{Q}_i|)$. These rate entries are collected and written together as the rate matrix $\vec{R}$. Consider adding these genie-provided links on top of the original source communication network, and denote the achievable distortion using a separation approach of successive refinement coupled with superposition channel code on this new communication network as $\mathcal{D}^{**}_{mul}(\vec{O},\vec{R})$. 

\textbf{Example:} Consider the example given in Fig. \ref{fig:example2}. The sets $\mathscr{Q}_i$'s are 
\begin{align}
\mathscr{Q}_1=\{3,4\},\quad \mathscr{Q}_2=\{4\},\quad \mathscr{Q}_3=\{3\}.
\end{align}
The orders when the distortion of $\hat{S}_{1,3}$ is larger than $\hat{S}_{1,4}$ are 
\begin{align}
O_1=(3,4),\quad O_2=(4),\quad O_3=(3).
\end{align}
The rate matrix of the genie-provided links has the form
\begin{align}
\vec{R}=\left[
\begin{array}{cccc}
\Diamond&\Diamond&R_{1,3}&R_{1,4}\\
\Diamond&\Diamond&\Diamond&\Diamond\\
\Diamond&\Diamond&\Diamond&\Diamond
\end{array}
\right]
\end{align}
where $\Diamond$ at row-$i$ and column-$j$ means that the genie does not provide any additional communication capability from source $S_i$ to node $j$, thus $R_{i,j}$ is not defined. The new network consisting of the additional genie-provided links on top of the original source communication network is given in Fig. \ref{fig:genienew}.

\begin{figure}[tb]
\begin{centering}
\includegraphics[width=8cm]{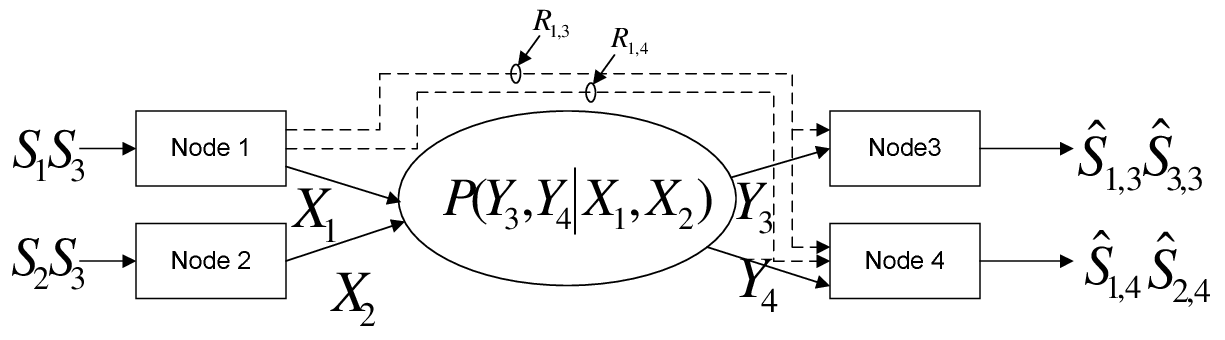}
\caption{\label{fig:genienew} The example in Fig. \ref{fig:example2} with the additional genie-provided links, which are drawn in dashed lines. The region $\mathcal{D}^{**}_{mul}(\vec{O},\vec{R})$ is the achievable distortion region using a separation-based scheme on this joint network.}
\end{centering}
\end{figure}

The following theorem is our first result on general network multicast.

\begin{theorem}
\label{theorem:multicast}
Let $\vec{D}$ be an achievable distortion matrix by joint source-channel coding, for which $\vec{O}$ is the corresponding orders induced by $\vec{D}$. For any random variable $U_{i,j}$ in the Abelian group $\mathcal{X}_i$, $j=1,2,\ldots,|\mathscr{Q}_i|$, such that
\begin{align}
&U_{i,O_i(|\mathscr{Q}_i|)}=V_{i,O_i(|\mathscr{Q}_i|)}\\
&U_{i,O_i(j)}=U_{i,O_i(j+1)}+V_{i,O_i(j)},
\end{align}
where $O_i(j)$ is the $j$-th node index in the set of $\mathscr{Q}_i$ according to the order $O_i$, and $V_{i,j}$'s are mutually independent such that $\Expt d(U_{i,O_i(j)})\leq D_{i,O_i(j)}$, let the genie-provided links support the rate matrix $\vec{R}_{\vec{O}}$ whose elements are
\begin{align}
\label{eq:ExcessRate}
R^*_{i,O_i(j)}=\left\{\begin{array}{cc}
C(D_{i,O_i(j)},U_{i,O_i(j)})& j\leq |\mathscr{Q}_i|,\,|\mathscr{Q}_i|>1\\
\Diamond&\mbox{otherwise}
\end{array}.
\right.
\end{align}
Then we have $\vec{D}\in \mathcal{D}^{**}_{mul}(\vec{O},\vec{R}^*)$.
\end{theorem}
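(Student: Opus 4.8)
The plan is to imitate the abstraction used in Theorems~\ref{theorem:DNSC} and \ref{theorem:main}, and then graft a rate-loss argument onto it. Fix an achievable $\vec{D}$; then for every $\epsilon>0$ there is a joint source-channel code whose induced distribution, after the usual ``super-block'' wrapping over $n'$ length-$m$ blocks, lets us view $P(\{\hat{S}^m_{i,j}\}\mid\{S^m_i\})$ as a memoryless super interference channel, whose marginal $P(\hat{S}^m_{i,O_i(1)},\ldots,\hat{S}^m_{i,O_i(|\mathscr{Q}_i|)}\mid S^m_i)$ is a (non-degraded) block broadcast channel for each source $S_i$ with $|\mathscr{Q}_i|>1$. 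On the genie-augmented network the separation scheme I would use is: for each such source, a successive refinement source code \cite{EquitzCover:91} in $|\mathscr{Q}_i|$ stages, whose stage-$j$ refinement codebook is generated from the distribution of the increment $V_{i,O_i(j)}$ (so that the reconstruction obtained from the first $j$ stages has residual distributed like $U_{i,O_i(j)}$, hence distortion at most $D_{i,O_i(j)}$); the stage-$j$ index is carried on the genie link of rate $R^*_{i,O_i(j)}$ to nodes $O_i(j),\ldots,O_i(|\mathscr{Q}_i|)$ using a superposition / degraded-message-set code \cite{KornerMarton:77} whose auxiliaries are precisely the $U_{i,O_i(j)}$'s, while whatever remains of the original network is used for the single-destination sources exactly as in Section~\ref{sec:proofunicast}. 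Since the stage indices are the only coupling between the source and channel parts, this is a bona fide separation scheme, and the claim is that it realizes $\mathbb{E}\,d(S_i(t)-\hat{S}_{i,O_i(j)}(t))\le D_{i,O_i(j)}$.

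The heart of the proof is verifying the rate ledger. On the source side, producing the stage-$j$ reconstruction with additive residual $U_{i,O_i(j)}$ along the chain $U_{i,O_i(j)}=U_{i,O_i(j+1)}+V_{i,O_i(j)}$ requires, by the rate-loss technique of \cite{Zamir:96} applied layer by layer, a stage-$j$ increment no larger than $C(D_{i,O_i(j)},U_{i,O_i(j)})$: this is exactly the operational meaning of the quantity in \eqref{eq:Cdef}, namely the rate at which a payload obeying a $d(\cdot)\le D_{i,O_i(j)}$ constraint can be pushed across the additive-noise channel $X\mapsto X+U_{i,O_i(j)}$, and it is here that the difference-distortion hypothesis \eqref{eq:DiffDist} is indispensable. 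On the channel side, the genie link for layer $j$ has capacity $R^*_{i,O_i(j)}$ by construction, and a routine random-coding analysis over long super-blocks (mirroring the error-event bookkeeping of Section~\ref{sec:proofunicast}) shows the degraded message set with these layer rates is delivered with vanishing error; the conventional rate-distortion bounds $I(S^m_i;\hat{S}^m_{i,O_i(j)})\ge m R_i(D_{i,O_i(j)})$ inherited from the abstracted super channel, together with the Markov structure among the $U_{i,O_i(j)}$'s, are precisely what make the two rate ledgers consistent. Driving $m$ and then $n'$ to infinity and invoking continuity of $R_i(\cdot)$ and $C(\cdot,\cdot)$ and closedness of $\mathcal{D}^{**}_{mul}(\vec{O},\vec{R}^*)$ absorbs the $\epsilon,\delta$ slack and yields $\vec{D}\in\mathcal{D}^{**}_{mul}(\vec{O},\vec{R}^*)$.

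The step I expect to be the real obstacle is this rate-loss estimate: one must show that replacing the (possibly non-nested, unstructured) reconstructions delivered by the original joint code with the structured additive-chain reconstructions built from the $V_{i,O_i(j)}$'s costs, per refinement layer, at most $C(D_{i,O_i(j)},U_{i,O_i(j)})$ in rate, and simultaneously that such a structured code can be produced by random coding against the codeword distribution $V_{i,O_i(j)}$ while keeping the per-layer residual asymptotically equal in law to $U_{i,O_i(j)}$ and independent of the decoder's running estimate. This is the quantitative core of the approximate-separation statement and the place where \eqref{eq:DiffDist}, \eqref{eq:Cdef} and \eqref{eq:DefCxD} (and, in the quadratic case, the universal value $C_{\mathcal{X}}(D)=1/2$) do their work. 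A secondary but nontrivial annoyance is the joint bookkeeping of the block length $m$, the super-block count $n'$, and all the slack terms across possibly many sources and refinement layers, so that every error event vanishes at once and the genie rates $R^*_{i,O_i(j)}$ are met exactly rather than merely in the limit.
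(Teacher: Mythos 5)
There is a genuine gap, and it sits exactly where you predicted the difficulty would be: the rate ledger. Your scheme routes the stage-$j$ index over the genie link and asserts that the successive-refinement increment for layer $j$ is ``no larger than $C(D_{i,O_i(j)},U_{i,O_i(j)})$.'' That is false: the layer-$j$ source-coding rate is (per $m$ samples) $I(S^m+U^m_j;S^m\mid S^m+U^m_{j-1})$, which is unbounded in general, whereas under quadratic distortion $C(D,U)\le 1/2$ bit no matter what the target distortions are. The quantity $C(D_j,U_j)$ is not the layer rate; it is the per-layer \emph{shortfall} between that source-coding rate and what the channel induced by the original joint code can deliver. Consequently the genie links alone cannot carry the layered indices, and a scheme in which ``whatever remains of the original network is used for the single-destination sources'' has no conduit for the bulk of the multicast rate. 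The missing ingredient is the paper's Lemma \ref{lem:MulLemma2}: viewing $P(\hat{S}^m_1,\ldots,\hat{S}^m_K\mid S^m)$ as a (non-degraded) block broadcast channel, one shows it supports the degraded-message-set rates $R^c_k=I(S^m+U^m_k;S^m\mid S^m+U^m_{k-1})-mC(D_k,U_k)$ by a superposition code built on the distribution $(S^m+U^m_1,\ldots,S^m+U^m_{K-1},S^m)$; the genie then tops up each layer by exactly $mC(D_k,U_k)$ so that the total matches the successive-refinement rates, and those refinement codewords (generated from $S^m+U^m_k$) meet $\Expt d(U_k)\le D_k$. Proving Lemma \ref{lem:MulLemma2} requires the rate-loss inequalities of Lemma \ref{lem:MulLemma1}, which must hold for \emph{every} receiver $i\ge k$ with its own distortion $D_i$, i.e.
\begin{align}
I(S^m+U^m_k;S^m\mid S^m+U^m_{k-1})-I(S^m+U^m_k;\hat{S}^m_i\mid S^m+U^m_{k-1})\le mC(D_i,U_k)\le mC(D_k,U_k),
\end{align}
using the independence of the $V$'s from $(S^m,\hat{S}^m_i)$, the Markov structure of the additive chain, concavity of mutual information, and the monotonicity of $C(\cdot,U_k)$. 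The conventional rate-distortion converse $I(S^m_i;\hat{S}^m_{i,O_i(j)})\ge mR_i(D_{i,O_i(j)})$ that you invoke to ``make the two ledgers consistent'' is precisely the tool the paper points out is insufficient once a source has multiple destinations (Section \ref{subsec:MUMMD}); it never enters the actual argument. Your closing paragraph does articulate the correct intuition (the genie covers a per-layer rate loss bounded by $C(D_j,U_j)$), but the scheme as described contradicts it, and no argument is given for the shortfall bound holding simultaneously at all downstream receivers, nor for the final consistency step that all users may simultaneously swap their original joint codes for the new channel codes while preserving the joint typicality on which the superposition decoding relies.
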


\textit{Remark:} This theorem also implies $\bigcup_{\vec{O}}\mathcal{D}^{*}_{mul}(\vec{O})\subseteq
\mathcal{D}_{mul}\subseteq
\bigcup_{\vec{O}}\mathcal{D}^{**}_{mul}({\vec{O}},\vec{R}^*)$. It in fact provides more than one outer bound, one for each set of $V_{i,j}$ random 
variables, resulting in a rather powerful bounding tool. The auxiliary random variables $U_k$\rq{}s are used in constructing the channel code and the source code,  
and thus the genie-provided links are also parametrized by these random variables.

For certain distortion measures, significant simplifications can be made. 
The next result states that a separation-based scheme
is approximately optimal, universally across all distortion values,
for the quadratic distortion measure where the source alphabet and the reconstruction alphabet are reals\footnote{Our proofs for the JSCMUD and JSCMMD problems rely only on weak typicality instead of strong typicality, thus the result can be extended to the continuous sources and channels with continuous alphabets and unbounded distortion measures under the technical condition that for each source $S_i$, for all letters $\hat{s}_i\in\hat{\mathcal{S}}_i$, $\Expt d(S_i,\hat{s}_i)<\infty$.
This ``bounded expected distortion" condition  \cite{Wyner:78} assures that the asymptotically small decoding error probability does not cause significant change in the distortion behavior.}. Note that the sources need not
be Gaussian. 

\begin{theorem}
\label{corollary:Gaussian}
Let $\vec{D}$ be an achievable distortion matrix, for which $\vec{O}$ is the corresponding orders induced by $\vec{D}$. 
Let the sources $S_i$'s satisfy the condition that for all letters $\hat{s}_i\in\hat{\mathcal{S}}_i$, $\Expt (S_i-\hat{s}_i)^2<\infty$.
Let the genie-provided links support the rate matrix $\vec{R}_{\vec{O}}$ whose elements are
\begin{align}
R^*_{i,O_i(j)}=\left\{\begin{array}{cc}
1/2 \mbox{ bit}& j\leq |\mathscr{Q}_i|,\,|\mathscr{Q}_i|>1\\
\Diamond&\mbox{otherwise}
\end{array}.
\right.
\end{align}
We have $\vec{D}\in \mathcal{D}^{**}_{mul}(\vec{O},\vec{R}^*)$ under the mean squared error distortion measure.
\end{theorem}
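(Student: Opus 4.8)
The plan is to obtain Theorem~\ref{corollary:Gaussian} as a direct specialization of Theorem~\ref{theorem:multicast} to the mean squared error distortion $d(x)=x^2$, by exhibiting a convenient choice of the auxiliary increments $V_{i,j}$. The first point to dispatch is that Theorem~\ref{theorem:multicast}, although stated for finite alphabets and bounded distortion, extends to real-valued sources and reconstructions with quadratic distortion under the hypothesis $\Expt(S_i-\hat{s}_i)^2<\infty$ for all $\hat{s}_i\in\hat{\mathcal{S}}_i$: its proof relies only on weak typicality, and this ``bounded expected distortion'' condition is precisely what guarantees that the vanishing decoding-error events of the separation scheme contribute a vanishing amount to the average distortion. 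I would state this extension explicitly and henceforth work with it.

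With that in hand, it suffices to construct, for each source $S_i$ with $|\mathscr{Q}_i|>1$, mutually independent increments $V_{i,O_i(j)}$, $j=1,\ldots,|\mathscr{Q}_i|$, so that the induced variables $U_{i,O_i(j)}=\sum_{k=j}^{|\mathscr{Q}_i|}V_{i,O_i(k)}$ obey $\Expt U_{i,O_i(j)}^2\le D_{i,O_i(j)}$ and $C(D_{i,O_i(j)},U_{i,O_i(j)})\le 1/2$ bit; Theorem~\ref{theorem:multicast} then yields $\vec{D}\in\mathcal{D}^{**}_{mul}(\vec{O},\vec{R}^*)$ with every defined entry of $\vec{R}^*$ at most $1/2$ bit. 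Since $\vec{O}$ is the order induced by $\vec{D}$ and, by the standing WLOG assumption that the reconstructions of a source are at distinct distortion levels, we have $D_{i,O_i(1)}>D_{i,O_i(2)}>\cdots>D_{i,O_i(|\mathscr{Q}_i|)}\ge 0$. The key step is then to take each $V_{i,O_i(j)}$ to be a zero-mean Gaussian of variance $D_{i,O_i(j)}-D_{i,O_i(j+1)}$ for $j<|\mathscr{Q}_i|$ and of variance $D_{i,O_i(|\mathscr{Q}_i|)}$ for $j=|\mathscr{Q}_i|$, all mutually independent. These variances are nonnegative, so this is a legitimate choice, and $U_{i,O_i(j)}$ is then zero-mean Gaussian of variance exactly $D_{i,O_i(j)}$, meeting the distortion constraint with equality.

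It then remains to evaluate $C(D,N)$ for $N$ zero-mean Gaussian of variance $D$. For any $X\perp N$ with $\Expt X^2\le D$, the maximum-entropy property of the Gaussian together with additivity of variances under independence gives $I(X;X+N)=h(X+N)-h(N)\le \tfrac{1}{2}\log\bigl(2\pi e(2D)\bigr)-\tfrac{1}{2}\log(2\pi e D)=\tfrac{1}{2}$ bit, with equality for Gaussian $X$; hence $C(D_{i,O_i(j)},U_{i,O_i(j)})=1/2$ bit, which is exactly the worst-noise capacity $C_{\mathcal{X}}(D)$ of \cite{Zamir:96} for the quadratic measure. Invoking the obvious monotonicity of $\mathcal{D}^{**}_{mul}(\vec{O},\cdot)$ in the genie rate matrix --- a genie that provides more capacity can only enlarge the separation-achievable region --- one may replace each $R^*_{i,O_i(j)}$ by $1/2$ bit and still conclude $\vec{D}\in\mathcal{D}^{**}_{mul}(\vec{O},\vec{R}^*)$ with the claimed rate matrix; rows with $|\mathscr{Q}_i|=1$ fall into the ``$\Diamond$'' branch and need no genie link, and if $D_{i,O_i(|\mathscr{Q}_i|)}=0$ then $C(0,\cdot)=0\le 1/2$, so these degenerate cases are harmless. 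The only real subtlety --- hence the main obstacle --- is making rigorous the continuous-alphabet extension of Theorem~\ref{theorem:multicast} just invoked; the MSE computation itself is routine once the Gaussian choice of the $V_{i,j}$'s has been made.
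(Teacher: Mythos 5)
Your proposal is correct and follows essentially the same route as the paper, which obtains Theorem \ref{corollary:Gaussian} simply by plugging Gaussian auxiliary random variables $V_{i,j}$ into Theorem \ref{theorem:multicast} and using the fact that under quadratic distortion the relevant capacity $C(D,N)$ with Gaussian $N$ of variance $D$ is $1/2$ bit (the paper cites \cite{Zamir:96} for this and relegates the continuous-alphabet/weak-typicality issue to a footnote with the same ``bounded expected distortion'' condition you invoke). Your explicit choice of increment variances $D_{i,O_i(j)}-D_{i,O_i(j+1)}$, the maximum-entropy computation, and the handling of degenerate cases are exactly the details the paper leaves implicit.
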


In the simplest case where a single node broadcasts a Gaussian source
to a set of receivers, this result essentially reduces to Corollary 1
given in \cite{TianDiggaviShamai:09}. The
intuitive translation of the above result is that when a genie helps
the separation-based scheme by providing half a bit information for
each receiver, and at the same time, all the receivers with better quality
reconstructions receive this information for free, then the
genie-aided separation-based scheme is as good as the optimal ones.
For any fixed network, the approximation in Theorem \ref{corollary:Gaussian} holds regardless of the quality of the channel. As such, this result is more useful in the high resolution regime for large networks, when the genie-provided links become negligible compared to the original communication network. 

In the remainder of the section we focus on the proof of Theorem \ref{theorem:multicast}, since Theorem \ref{corollary:Gaussian} can be directly obtained by using Gaussian auxiliary random variables $V$'s in Theorem \ref{theorem:multicast}.

\begin{IEEEproof}[Proof of Theorem \ref{theorem:multicast}]
To simplify the notation, let us first consider a single source $S$; assume for the time-being that
the joint source-channel encoding procedure is still performed on other sources. 
Without loss of generality, assume the destination nodes of source $S$ are
$1,2,\ldots,K$; moreover, the distortions, which are achieved by this given source-channel joint 
code, are ordered as $D_{1}\geq D_{2}\geq \ldots\geq D_{K}$.

A set of auxiliary random variables are chosen in the 
alphabet $\mathcal{S}$ such that,
\begin{align}
\label{eq:AuxVarDef}
U_K=V_K,\quad U_k=U_{k+1}+V_k,\quad k=1,2,\ldots,K-1,
\end{align}
where $V_k$'s are random variables in the alphabet $\mathcal{S}$,
{\em independent of everything else}; furthermore, they have to satisfy
$\Expt d(U_k)\leq D_k$.

Consider a joint source-channel code which induces the distortion vector $(D_1,D_2,\ldots,D_K)$ for source $S$, whose reconstructions 
are $\hat{S}^m_{1},\hat{S}^m_{2},\ldots,\hat{S}^m_{K}$. The transition probability
$P(\hat{S}^m_{1},\hat{S}^m_{2},\ldots,\hat{S}^m_{K}|S^m)$ can be viewed as a broadcast channel, denoted as $P_{bc}$. 
We need the following lemma, whose proof will be given shortly. The asymptotically small quantities $\delta,\epsilon$ are omitted in the sequel, which are inconsequential.

\begin{lemma}
\label{lem:MulLemma2}
The following degraded message set broadcast rates can be (asymptotically) supported on $P_{bc}$
\begin{align}
&R^c_1=I(S^m+U^m_1;S^m)-mC(D_1,U_1)\nonumber\\ 
&R^c_k=I(S^m+U^m_k;S^m|S^m+U^m_{k-1})-mC(D_k,U_k),\nonumber\\
&\qquad\qquad\qquad\qquad\qquad k=2,3,\ldots,K. \label{eq:MulLemma2}
\end{align}
Moreover, these rates can be achieved by a random superposition code based on the joint distribution
$P(S^m+U^m_1,S^m+U^m_2,\ldots,S^m+U^m_{K-1},S^m)$.
\end{lemma}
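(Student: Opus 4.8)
The plan is to reduce the lemma to a single-letter ``rate-loss'' inequality and a routine superposition-coding argument. First I would note that $P_{bc}$, the transition law $P(\hat S^m_1,\ldots,\hat S^m_K\mid S^m)$ induced by the given JSCMMD code, becomes a genuine DMC over the super-alphabets once we run $n'$ length-$m$ blocks, each processed by an independent copy of the code with fresh channel noise; the channel inputs are the source-codewords $S^m\langle v\rangle$. Record the structure of the auxiliaries: letting $U^m_k$ be an i.i.d.\ copy of $U_k=V_k+V_{k+1}+\cdots+V_K$, the layer processes $T_k:=S^m+U^m_k$ ($k=1,\ldots,K$) satisfy $T_k=T_{k+1}+V^m_k$ with $V^m_k$ independent of everything, so $(T_1,\ldots,T_K,S^m)$ is a Markov chain; moreover, since the $V$'s are independent of the code's channel noise, $\hat S^m_\kappa$ is conditionally independent of $(T_1,\ldots,T_K)$ given $S^m$. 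The code is then the standard $K$-layer superposition code: layer-$1$ cloud centers i.i.d.\ $\sim P_{T_1}$, layer-$j$ satellites $\sim P_{T_j\mid T_{j-1}}$ superposed on the chosen layer-$(j-1)$ word, and transmitted sequences $\sim P_{S^m\mid T_K}$; message $W_k$ indexes layer $k$ at rate $R^c_k$ per source sample (interpreting a negative $R^c_k$ as zero).

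For decoding, receiver $\kappa$ observes $\hat S^m_\kappa$ and successively decodes layers $1,2,\ldots,\kappa$; by the Markov chain above, $(T_1,\ldots,T_{j-2})\leftrightarrow T_{j-1}\leftrightarrow(T_j,\hat S^m_\kappa)$, so the condition for decoding layer $j$ collapses to $R^c_j<I(T_j;\hat S^m_\kappa\mid T_{j-1})$ per super-symbol, and the error probability vanishes as $n'\to\infty$ by the usual packing argument on the DMC $P_{bc}$. Hence it suffices to show that for every $k$ and every intended receiver $\kappa\ge k$,
\[
I(T_k;\hat S^m_\kappa\mid T_{k-1})\ \ge\ I(T_k;S^m\mid T_{k-1})-m\,C(D_k,U_k)\ =\ R^c_k ,
\]
with $T_0$ a constant. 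Because $I(T_k;\hat S^m_\kappa\mid T_{k-1},S^m)=0$ (given $S^m$, $T_k$ is a function of the $V$'s, which are independent of $\hat S^m_\kappa$ and of the residual randomness in $T_{k-1}$), the difference of the first two terms equals $I(T_k;S^m\mid T_{k-1},\hat S^m_\kappa)$, so the displayed bound is equivalent to
\[
I\big(S^m+U^m_k;\,S^m\ \big|\ S^m+U^m_{k-1},\ \hat S^m_\kappa\big)\ \le\ m\,C(D_k,U_k).
\]

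I would prove this last bound --- the technical heart of the lemma --- as follows. Condition on $\hat S^m_\kappa=\hat s$ and set $\tilde X^m:=S^m-\hat s$, the reconstruction error, which on average satisfies $\frac1m\sum_t\Expt d(\tilde X_t)\le D_\kappa\le D_k$ for the difference distortion measure. Writing $S^m+U^m_{k-1}=(S^m+U^m_k)+V^m_{k-1}$ with $V^m_{k-1}$ independent of $(\tilde X^m,U^m_k)$, conditioning on this sum only decreases the mutual information, so the quantity is at most $I(\tilde X^m+U^m_k;\tilde X^m\mid\hat S^m_\kappa)$. Using independence of $U^m_k$ and sub-additivity of entropy, this is at most $\sum_t I(\tilde X_t;\tilde X_t+U_{k,t}\mid\hat S^m_\kappa)$, and each summand is at most $C\big(\Expt[d(\tilde X_t)\mid\hat S^m_\kappa],U_k\big)$ by the very definition \eqref{eq:Cdef} of $C(\cdot,U_k)$. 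Finally, concavity and monotonicity of $D\mapsto C(D,U_k)$ together with Jensen's inequality over $t$ and over $\hat S^m_\kappa$ give $\Expt_{\hat S^m_\kappa}\sum_t C(\cdot,U_k)\le m\,C(D_\kappa,U_k)\le m\,C(D_k,U_k)$, as claimed; the same bound holds uniformly over all $\kappa\ge k$ precisely because $D_\kappa\le D_k$, so the worst intended receiver $\kappa=k$ is the binding one and no degradedness of $P_{bc}$ is needed. The main obstacle is exactly this chain: making the single-letterization land on $C(D_k,U_k)$ requires the ``peeling'' of the conditioning on $S^m+U^m_{k-1}$ through the independent increment $V^m_{k-1}$ and the invocation of concavity of $C(D,N)$ in $D$; the superposition construction and the error analysis are routine once the i.i.d.\ super-block structure of $P_{bc}$ is observed.
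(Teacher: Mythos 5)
Your proof is correct and takes essentially the same route as the paper: a superposition code built on the Markov chain $S^m+U^m_1\leftrightarrow\cdots\leftrightarrow S^m+U^m_{K-1}\leftrightarrow S^m\leftrightarrow\hat S^m_\kappa$ over the induced super broadcast channel, with decodability at each intended receiver $\kappa\ge k$ established by the rate-loss identity $I(S^m+U^m_k;S^m|S^m+U^m_{k-1})-I(S^m+U^m_k;\hat S^m_\kappa|S^m+U^m_{k-1})=I(S^m+U^m_k;S^m|S^m+U^m_{k-1},\hat S^m_\kappa)\le mC(D_\kappa,U_k)\le mC(D_k,U_k)$, which is exactly the content of the paper's auxiliary lemma on mutual-information gaps. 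Your only deviations are cosmetic: you keep the conditioning on $\hat S^m_\kappa$ through the single-letterization and invoke concavity of $C(\cdot,U_k)$ in $D$, where the paper instead drops the conditioning via $H(S^m+U^m_k|\hat S^m_\kappa)\le H(S^m-\hat S^m_\kappa+U^m_k)$ and uses concavity of mutual information in the input law, and you carry the $K$-th message on $S^m+U^m_K$ with input drawn from $P_{S^m|S^m+U^m_K}$ rather than directly on $S^m$, which sidesteps the paper's separate third inequality while preserving the required input distribution and rates.
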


Though this lemma is regarding the channel $P_{bc}$, in a manner similar to the proof for general network unicast,
we can conclude that on the original network, when all the other encoders still perform the original joint source-channel encoding, 
the communication channel from source $S$ to its destinations can support these rates per $m$ source samples. This is because the broadcast channel $P_{bc}$ is simply 
the original communication channel with certain additional operations on the block level. 
Thus together with the genie-provided links, we can send messages from source $S_i$ to its destinations at rates
\begin{align}
&R^{(m)}_1=I(S^m+U^m_1;S^m)\nonumber\\
&R^{(m)}_k=I(S^m+U^m_k;S^m|S^m+U^m_{k-1}),\, k=2,3,\ldots,K. 
\end{align}
The rates $(R^{(m)}_1,R^{(m)}_2,\ldots,R^{(m)}_K)$ are exactly the (asymptotic) source coding rates per $m$-samples in a successive refinement random code \cite{EquitzCover:91} constructed using the distribution $P(S^m+U^m_1,S^m+U^m_2,\ldots,S^m+U^m_K)$. Thus the distortion $\Expt d(S+U_k-S)=\Expt d(U_k)\leq D_k$ is achievable using the separation 
approach in this genie-aided network, if this successive refinement source code is used.

It remains to argue that if all the users simultaneously replace the original joint source-channel 
codes with the newly constructed channel codes, the rates that can be supported are
still the same. This is indeed true, because in Lemma \ref{lem:MulLemma2},
we only rely on the joint typicality on the block level when the channel input is of distribution $P(S^n)$. 
This however does not change if all the users replace the joint source-channel 
codes with their newly constructed channel codes, since these superposition channel codes preserve the joint typicality according to $P(S^m_1,S^m_2,\ldots,S^m_K)$. 
This completes the proof, except for Lemma \ref{lem:MulLemma2}. 
\end{IEEEproof}

To prove Lemma \ref{lem:MulLemma2}, we first give an auxiliary lemma.
\begin{lemma}
\label{lem:MulLemma1}
Let $S^m$, $U^m_i$ and $\hat{S}^m_i$ be specified as earlier, then we have for $i=1,2,\ldots,K$
\begin{align}
&I(S^m+U^m_1;S^m)-I(S^m+U^m_1;\hat{S}^m_i) &\nonumber\\
&\qquad\qquad\qquad\qquad\qquad\leq mC(D_i,U_1),\label{eq:MulLemma11}\\
&I(S^m+U^m_k;S^m|S^m+U^m_{k-1})\nonumber\\
&\qquad\qquad-I(S^m+U^m_k;\hat{S}^m_i|S^m+U^m_{k-1})\nonumber\\
&\qquad\qquad\qquad\leq  mC(D_i,U_k), \,\, k=2,\ldots,K-1,\label{eq:MulLemma12}\\ 
&I(S^m+U^m_K;S^m|S^m+U^m_{K-1})\nonumber\\
&\qquad\qquad-I(S^m;\hat{S}^m_i|S^m+U^m_{K-1}) \nonumber\\
&\qquad\qquad\qquad\qquad\qquad\leq mC(D_i,U_K).\label{eq:MulLemma13}
\end{align}
\end{lemma}

\begin{IEEEproof}[Proof of Lemma \ref{lem:MulLemma1}]

We can write
$I(S^m+U^m_1;\hat{S}^m_{i},S^m)$ in two ways
\begin{align}
&I(S^m+U^m_1;\hat{S}^m_{i},S^m)\nonumber\\
&=I(S^m+U^m_1;\hat{S}^m_{i})+I(S^m+U^m_1;S^m|\hat{S}^m_{i}),\\
&I(S_i^m+U^m_1;\hat{S}^m_{i},S^m)\nonumber\\
&=I(S^m+U^m_1;S^m)+I(S^m+U^m_1;\hat{S}^m_{i}|S^m)\nonumber\\
&=I(S^m+U^m_1;S^m),
\end{align}
where $I(S^m+U^m_1;\hat{S}^m_{i}|S^m)=I(U^m_1;\hat{S}^m_{i}|S^m)=0$,
because the construction of the auxiliary random variable $U_1$
ensures that $U^m_1$ is independent of $(\hat{S}^m_{i},S^m)$, as seen in
\eqref{eq:AuxVarDef}. Thus we have
\begin{align}
&I(S^m+U^m_1;S^m)-I(S^m+U^m_1;\hat{S}^m_{i})\nonumber\\
&=I(S^mU^m_1;S^m|\hat{S}^m_{i})\nonumber\\
&\stackrel{(a)}{=}H(S^m+U^m_1|\hat{S}^m_{i})-H(U^m_1)\nonumber\\
&\leq H(S^m-\hat{S}^m_{i}+U^m_1)-H(U^m_1)\nonumber\\
&\leq \sum_{j=1}^mH(S(j)-\hat{S}_i(j)+U_1(j))-H(U_1(j))\nonumber\\
&=\sum_{j=1}^m I(S(j)-\hat{S}_i(j);S(j)-\hat{S}_i(j)+U_1(j))\nonumber\\
&\leq mC(D_i,U_1),\label{eqn:firstgap}
\end{align}
where $(a)$ follows again since $U_1^m$ is independent of
$(\hat{S}^m_{i},S^m)$, and the last step follows the concavity of
$I(X;Y)$ as a function of the marginal distribution. This proves (\ref{eq:MulLemma11}).

Note further that for $k=2,3,\ldots,K$, we have
\begin{align*}
&I(S^m+U^m_k;S^m,\hat{S}^m_i|S^m+U^m_{k-1})\nonumber\\
&=I(S^m+U^m_k;S^m|S^m+U^m_{k-1})\nonumber\\
&\qquad\qquad+I(S^m+U^m_k;\hat{S}^m_i|S^m,S^m+U^m_{k-1})\nonumber\\ 
&=I(S^m+U^m_k;S^m|S^m+U^m_{k-1}),
\end{align*}
as well as
\begin{align*}
&I(S^m+U^m_k;S^m,\hat{S}^m_i|S^m+U^m_{k-1})\nonumber\\
&= I(S^m+U^m_k;\hat{S}^m_i|S^m+U^m_{k-1})\nonumber\\
&\qquad\qquad+I(S^m+U^m_k;S^m|\hat{S}^m_i,S^m+U^m_{k-1}).
\end{align*}
It follows that
\begin{align}
&I(S^m+U^m_k;S^m|S^m+U^m_{k-1})\nonumber\\
&\qquad-I(S^m+U^m_k;\hat{S}^m_i|S^m+U^m_{k-1})\nonumber\\
&=I(S^m+U^m_k;S^m|\hat{S}^m_i,S^m+U^m_{k-1}).
\end{align}
Thus we have
\begin{align}
&I(S^m+U^m_k;S^m|\hat{S}^m_i,S^m+U^m_{k-1})\nonumber\\
&\stackrel{(b)}{=}H(S^m|\hat{S}^m_i,S^m+U^m_{k-1})-H(S^m|\hat{S}^m_i,S^m+U^m_k)\nonumber\\
&\leq H(S^m|\hat{S}^m_i)-H(S^m|\hat{S}^m_i,S^m+U^m_k)\nonumber\\
&=I(S^m;S^m+U^m_k|\hat{S}^m_i)\nonumber\\
&=H(S^m+U^m_k|\hat{S}^m_i)-H(U^m_k)\nonumber\\
&\leq H(S^m-\hat{S}^m_i+U^m_k)-H(U^m_k)\nonumber\\
&\leq mC(D_i,U_k),\label{eqn:secondgap}
\end{align}
where $(b)$ is due to the Markov string $S^m+U_{k-1}^m \leftrightarrow
S^m+U_k^m \leftrightarrow S^m \leftrightarrow \hat{S}^m_i$. This proves (\ref{eq:MulLemma12}).

Because of the Markov string $\hat{S}^m_i \leftrightarrow S^m \leftrightarrow S^m+U^m_K\leftrightarrow S^m+U^m_{K-1}$, we have 
\begin{align}
&I(S^m;\hat{S}^m_i|S^m+U^m_{K-1})-I(S^m+U^m_K;\hat{S}^m_i|S^m+U^m_{K-1})\nonumber\\
&=H(\hat{S}^m_i|S^m+U^m_{K-1},S^m+U^m_K)\nonumber\\
&\qquad\qquad\qquad-H(\hat{S}^m_i|S^m+U^m_{K-1},S^m)\nonumber\\
&=H(\hat{S}^m_i|S^m+U^m_K)-H(\hat{S}^m_i|S^m)\nonumber\\
&=H(\hat{S}^m_i|S^m+U^m_K)-H(\hat{S}^m_i|S^m,S^m+U^m_K)\nonumber\\
&=I(\hat{S}^m_i;S^m|S^m+U^m_K)\geq 0,
\end{align}
and it follows that
\begin{align}
&I(S^m+U^m_K;S^m|S^m+U^m_{K-1})-I(S^m;\hat{S}^m_i|S^m+U^m_{K-1})\nonumber\\
&\leq I(S^m+U^m_K;S^m|S^m+U^m_{K-1})\nonumber\\
&\qquad\qquad\qquad-I(S^m+U^m_K;\hat{S}^m_i|S^m+U^m_{K-1})\nonumber\\
&\leq mC(D_i,U_K). \label{eqn:thirdgap}
\end{align}
This proves (\ref{eq:MulLemma13}).
\end{IEEEproof}

\begin{IEEEproof}[Proof of Lemma \ref{lem:MulLemma2}]

We shall use the distribution $P(S^m+U^m_1,S^m+U^m_2,\ldots,S^m+U^m_{K-1},S^m)$ to construct 
superposition broadcast channel code on the broadcast channel $P_{bc}$ for a degraded message set. 
The rates (per length-$m$ block) for these messages within the degraded message set are (asymptotically)
\begin{align}
&R^c_1=I(S^m+U^m_1;S^m)-mC(D_1,U_1),\label{eqn:R1}\\
&R^c_i=I(S^m+U^m_i;S^m|S^m+U^m_{i-1})-mC(D_i,U_i),\nonumber\\
&\qquad\qquad\qquad\qquad\qquad i=2,3,\ldots,K,\label{eqn:RK}
\end{align}
which need to be shown to be indeed achievable on $P_{bc}$. 

Since this channel itself is not degraded, we have to show that the superposition coding
scheme succeeds for all the receivers. To see this, observe that for the $i$-th
receiver, we have
\begin{align}
I(S^m+U^m_1;S^m)-I(S^m+U^m_1;\hat{S}^m_i)\leq mC(D_i,U_1),
\end{align}
by Lemma \ref{lem:MulLemma1}. It follows that
\begin{align}
&I(S^m+U^m_1;\hat{S}^m_i)-R^c_1\nonumber\\
&=I(S^m+U^m_1;\hat{S}^m_i)-I(S^m+U^m_1;S^m)+mC(D_1,U_1)\nonumber\\
&\geq mC(D_1,U_1)-mC(D_i,U_1)\geq 0,
\end{align}
where the last inequality is straightforward by noticing
\begin{align}
C(D,N)\geq C(D',N),
\end{align}
when $D\geq D'$. Thus the $i$-th receiver,  $i\geq 1$,  can indeed decode the first message.

Similarly, we have for $i\geq k$
\begin{align}
&I(S^m+U^m_k;\hat{S}^m_i|S^m+U^m_{k-1})-R^c_k\nonumber\\
&=I(S^m+U^m_k;\hat{S}^m_i|S^m+U^m_{k-1})\nonumber\\
&\qquad-I(S^m+U^m_k;S^m|S^m+U^m_{k-1})+mC(D_k,U_k)\nonumber\\
&\geq mC(D_k,U_k)-mC(D_i,U_k)\geq 0,
\end{align}
and thus we conclude the $i$-th receiver can decode the messages
$1,2,\ldots,i$. The $K$-th receiver does not pose any additional
difficulty. Thus indeed the rates specified in
(\ref{eqn:R1})-(\ref{eqn:RK}) can be supported on $P_{bc}$, and the proof is complete.
\end{IEEEproof}

\section{Concluding Remarks}
\label{sec:conclusion}

We considered the optimality of the source-channel separation architecture
in networks, and showed that the separation approach is optimal for the problems of
distributed network joint source-channel coding and joint source-channel multiple 
unicast with distortions. Moreover, the separation approach is also
approximately optimal for the problem of joint source-channel multiple multicast with distortions under
certain distortion measures. The results in this work are obtained without explicit characterizations of the underlying
regions. The source coding problem extracted from the distributed network
source coding scenario implies that the interactive coding aspect
needs to be carefully incorporated, which suggests a distinct line of research direction into network
source coding.

For notational and conceptual simplicity, we made many assumptions which are not strictly necessary. 
We believe the results can be extended to more general cases with some minimal efforts.
\begin{itemize}
\item \textbf{Distributed network joint source-channel coding:} The synchronization requirement among sources can be removed, {\em i.e.}, the source bandwidths do not have to be the same throughout the whole network. The reconstructions of a source $S_i$ can be under different distortion measures; in fact the distortion measures can be defined on multiple sources, such as to reconstruct $(S_1-S_2)$. The restriction on the sources and the channels being finite-alphabet may be relaxed using the techniques in \cite{ElGamalKimBook}.
\item \textbf{Joint source-channel multiple unicast with distortions:} The synchronization requirement among sources and channels can be removed and the memoryless requirement on the channel can be relaxed to channels with finite memory (see \cite{TianArxiv:10} for an outline). As mentioned, the restriction on the finite alphabets can be relaxed. The condition that each source is to be reconstructed at one destination can be relaxed to some extent: when each source is to be reconstructed at multiple destinations but at the exact same distortion, then the source-channel separation architecture is still optimal. 
\item \textbf{Joint source-channel multiple multicast with distortions:} Similar to the JSCMUD case, the synchronization, the memoryless channel, and the finite-alphabet requirement can be relaxed. The condition that each source is to be reconstructed under the same distortion measure can be relaxed  to different distortion measures. 
If some of the reconstructions of a source $S_i$ are specified to have the same distortion a priori, then the approximation upper bound can be improved.
\end{itemize}

In the point-to-point setting, the source and the channel are 
specified by their statistical behaviors alone; however in the network setting, 
the new components of the connectivity structure among nodes 
and the source-demand coding requirements are introduced.
Our result in DNJSCC treats the source statistics and these network components 
as a whole, and the channel statistics as the other, resulting in the separation between
 a complex network source coding problem and multiple conventional point-to-point channel coding problems. 
In contrast, the result in JSCMUD treats the channel statistics and the network components as a whole, and
the source statistics as the other, resulting in the separation between a complex
network channel coding problem and multiple conventional point-to-point source coding problems. 
These separations are not the only possibilities, and one can choose to separate in a different manner. 
In this work we have not considered transmitting generally correlated sources over a general channel network, and it is unclear whether
there exist scenarios for which a separation architecture is optimal or approximately optimal. Thus the problem of source-channel separation is by no means solved, and 
it in fact calls for further investigation.

\section*{Acknowledgment}

The authors wish to thank Associate Editor Young-Han Kim and the anonymous reviewers for their constructive comments.

\bibliographystyle{IEEEbib}

\begin{IEEEbiographynophoto}{Chao Tian}(S'00, M'05, SM'12) received the B.E. degree in Electronic Engineering from Tsinghua University, Beijing, China, in 2000 and the M.S. and Ph. D. degrees in Electrical and Computer Engineering from Cornell University, Ithaca, NY in 2003 and 2005, respectively. 

Dr. Tian was a postdoctoral researcher at Ecole Polytechnique Federale de Lausanne (EPFL) from 2005 to 2007. He joined AT\&T Labs--Research, Florham Park, New Jersey in 2007, where he is now a Senior Member of Technical Staff. He is currently also an Adjunct Associate Professor at Columbia University and an Associate Editor for {\sc the IEEE Signal Processing Letters}. His research interests include multi-user information theory, joint source-channel coding, quantization design and analysis, as well as image/video coding and processing. 

Dr. Tian received the Liu Memorial Award at Cornell University in 2004, and the AT\&T Key Contributor Award in 2010 and 2011. 
\end{IEEEbiographynophoto}

\begin{IEEEbiographynophoto}{Jun Chen}(S'03, M'06) received the B.E. degree with honors in communication engineering from Shanghai Jiao Tong University, Shanghai, China, in 2001 and the M.S. and Ph.D. degrees in electrical and computer engineering from Cornell University, Ithaca, NY, in 2004 and 2006, respectively.

He was a Postdoctoral Research Associate in the Coordinated Science Laboratory at the University of Illinois at Urbana-Champaign, Urbana, IL, from 2005 to 2006, and a Postdoctoral Fellow at the IBM Thomas J. Watson Research Center, Yorktown Heights, NY, from 2006 to 2007. He is currently an Associate Professor of Electrical and Computer Engineering at McMaster University, Hamilton, ON, Canada. His research interests include information theory, wireless communications, and signal processing.

He received several awards for his research, including the Josef Raviv Memorial Postdoctoral Fellowship in 2006, the Early Researcher Award from the Province of Ontario in 2010, and the IBM Faculty Award in 2010.
\end{IEEEbiographynophoto}

\begin{IEEEbiographynophoto}{Suhas N. Diggavi}
(S'93, M'99, F'13) received the B. Tech. degree in electrical
engineering from the Indian Institute of Technology, Delhi, India, and
the Ph.D. degree in electrical engineering from Stanford University,
Stanford, CA, in 1998.

After completing his Ph.D., he was a Principal Member Technical Staff 
in the Information Sciences Center, AT\&T Shannon Laboratories, Florham
Park, NJ. Since then he had been in the faculty of the School of
Computer and Communication Sciences, EPFL, where he directed the
Laboratory for Information and Communication Systems (LICOS). He is
currently a Professor, in the Department of Electrical Engineering, at
the University of California, Los Angeles. His research interests
include wireless communications networks, information theory, network
data compression and network algorithms.

He is a co-recipient of the 2013 IEEE Information Theory Society \& Communications Society Joint Paper Award, the 2013 ACM International Symposium on Mobile Ad Hoc Networking and Computing (MobiHoc) best paper award, the 2006 IEEE Donald Fink prize paper award, 2005 IEEE Vehicular Technology Conference best paper award, the Okawa foundation research award and is a Fellow of the IEEE. He has served on the editorial board for {\sc the IEEE Transactions on Information Theory}, {\sc the ACM/IEEE Transactions on Networking} and {\sc the IEEE Communication Letters}, a guest editor for {\sc the IEEE Journal on Selected Topics in Signal Processing} and was the Technical Program Co-Chair for 2012 IEEE Information Theory Workshop. He has 8 issued patents.

\end{IEEEbiographynophoto}

\begin{IEEEbiographynophoto}{Shlomo Shamai (Shitz)}(S'80, M'82, SM'89, F'94) 
received the B.Sc., M.Sc., and Ph.D. degrees in electrical engineering from the Technion---Israel Institute of Technology, in 1975, 1981 and 1986 respectively.

During 1975-1985 he was with the Communications Research Labs, in the capacity of a Senior Research Engineer. Since 1986 he is with the Department of Electrical Engineering, Technion---Israel Institute of Technology, where he is now a Technion Distinguished Professor, and holds the William Fondiller Chair of Telecommunications. His research interests encompasses a wide spectrum of topics in information theory and statistical communications.

Dr. Shamai (Shitz) is an IEEE Fellow, a member of the Israeli Academy of Sciences and Humanities and a Foreign Associate of the US National Academy of Engineering. He is the recipient of the 2011 Claude E. Shannon Award and the 2014 Rothschild Prize in Mathematics/Computer Sciences and Engineering.
He has been awarded the 1999 van der Pol Gold Medal of the Union Radio Scientifique Internationale (URSI), and is a co-recipient of the 2000 IEEE Donald G. Fink Prize Paper Award, the 2003, and the 2004 joint IT/COM societies paper award, the 2007 IEEE Information Theory Society Paper Award, the 2009 European Commission FP7, Network of Excellence in Wireless COMmunications (NEWCOM++) Best Paper Award, and the 2010 Thomson Reuters Award for International Excellence in Scientific Research.  He is also the recipient of
1985 Alon Grant for distinguished young scientists and the 2000 Technion Henry Taub Prize for Excellence in Research.
He has served as Associate Editor for {\sc the Shannon Theory of the IEEE Transactions on Information Theory}, and has also served twice on the Board of Governors of the Information Theory Society.
He is a member of the Executive Editorial Board of the IEEE Transactions on Information Theory

\end{IEEEbiographynophoto}

\end{document}